\let\strong\textbf
\newsavebox{\mybox} % for scaling lstlisting envs in experiment section
\newcommand\cAll{\ensuremath{c_\textsf{All}}\xspace}
\newcommand\cNone{\ensuremath{c_\textsf{None}}\xspace}
\newcommand\cMatch{\ensuremath{c_\textsf{Match}}\xspace}
\newcommand\m[1]{\ensuremath{#1}\xspace}
\newcommand\mcl[1]{\ensuremath{\mathcal{#1}}\xspace}
\newcommand\T{\ensuremath{\mathcal{T}}\xspace}
\newcommand\tripleset{\m{2^\T}}
\newcommand\alldocs{\ensuremath{\mathcal{D}}\xspace}
\newcommand\doc{\ensuremath{d}\xspace}
\newcommand\docs{\ensuremath{D}\xspace}
\newcommand\allwolds{\ensuremath{\mathcal{W}}\xspace}
\newcommand\wold{\ensuremath{W}\xspace}
\newcommand\data{\m{\mathit{data}}\xspace}
\newcommand\adoc{\m{\mathit{adoc}}}
\newcommand\uri{\m{u}}
\newcommand\uris{\m{\mathcal{U}}}
\newcommand\blanks {\m{\mathcal{B}}}
\newcommand\literals{\m{\mathcal{L}}}
\newcommand\ignore[1]{}
\newcommand\simpl{\m{\mathit{singleton}}} % was 'only'
\newcommand\ltrue{\m{\mathrm{true}}}
\newcommand\lfalse{\m{\mathrm{false}}}
\newcommand\alltriples{\mcl{T}}
\newcommand\tuple[1]{\m{\langle #1 \rangle}}
\newcommand\triple[3]{\m{(#1, #2, #3)}} % only for RDF triples
\newcommand\urisof{\m{\mathrm{uris}}}
\newcommand\evaluation[2]{\m{[\![#1]\!]^{#2}}}
\newcommand\ctxqry[1][{\cw}]{\m{\mathbb{P}}}
\newcommand\sparqlex{\m{P}}
\newcommand\subwebs[1]{\m{\mathit{subwebs}(#1)}}
\newcommand\cw{\m{\Theta}}
\newcommand\cwtuple{\m{\boldsymbol{\cw}}}
\newcommand\ctxwold[1][]{\m{\mathbb{W}_{#1}}}
\newcommand\sselect{\sigma}
\newcommand\queryingagent{querying~agent\xspace}
\newcommand\queryprocessor{query~processor\xspace}
\newcommand\datapublisher{data~publisher\xspace}
\newcommand\queryingagents{\queryingagent{s}\xspace}
\newcommand\datapublishers{\datapublisher{s}\xspace}
\newcommand\ldql{\m{\mathit{LDQL}}}
\newcommand\lpe{\textsc{lpe}\xspace}
\newcommand\lpes{\textsc{lpe}s\xspace}
\DeclareMathOperator \ldqlAND{AND}
\newcommand\seed{\m{s}}
\newcommand{\brak}[1]{\m{\{#1\}}}
\newcommand\lpeEval[3]{\m{\llbracket#1\rrbracket_{{#2}}^{{\m{#3}}}}}
\newcommand\ldqlEval[3]{\m{\llbracket#1\rrbracket_{{#2}}^{{\m{#3}}}}}
\newcommand\queryEval[3]{\m{\llbracket#1\rrbracket_{{#2}}^{{\m{#3}}}}}
\newcommand\constrEval[2]{\m{\llbracket#1\rrbracket^{#2}}}
\newcommand\GGPEval[2]{\m{\llbracket#1\rrbracket^{#2}}}
\DeclareMathOperator{\lpeOr}{\text{\textbf{|}}}
\newcommand{\uctx}{\m{u_{\mathit{ctx}}}}
\newcommand{\lstinlineProxy}[1]{\lstinline{#1}}
\newcommand{\soi}{\m{\mathit{soi}}}
\newlist{andlist}{enumerate*}{1}
\setlist[andlist,1]{
  label={\textit{\roman*)}},
  itemjoin={{, }},
  itemjoin*={{, and }},
}
\newlist{orlist}{enumerate*}{1}
\setlist[orlist,1]{
  label={\textit{\roman*)}},
  itemjoin={{, }},
  itemjoin*={{, or }},
}
\renewcommand\cite[1]{\citep{#1}}
\crefname{evalQ}{Query}{Query} % this adds a different label for queries in subfigure, however to use it uncomment the four lines in figure 3
\let\oldbnfts\bnfts
\renewcommand\bnfts[1]{\oldbnfts{\textbf{#1}}}
\newcommand\codefont{\ttfamily\fontsize{7.5}{8.5}\selectfont}
\lstdefinelanguage{SPARQL}{
  keywords={SELECT, WHERE, OPTIONAL, FOLLOW, INCLUDE, WITH, SUBWEBS, SERVICE, FILTER, RECURSE, UNION, NOT, EXISTS},
  comment=[f]{\#},
}
\lstdefinelanguage{Turtle}{
  comment=[f]{\#},
}
\crefname{lstlisting}{Listing}{Listings}
\crefname{rdfdoc}{Document}{Documents}
\renewenvironment{rdfdoc}[2]{
  \newcommand\rdfdoclabel{#1}
  
}{
  \vspace{-1.25\baselineskip}
  \captionof{rdfdoc}{Contents of \url{\rdfdocurl}}
  \label{\rdfdoclabel}
}
\crefname{query}{Query}{Queries}
\renewenvironment{query}[2]{
  \newcommand\querylabel{#1}
  \newcommand\querycaption{#2}
}{
  \vspace{-1.25\baselineskip}
  \captionof{query}{\querycaption}
  \label{\querylabel}
}
\crefname{test}{Eval. Query}{Queries}
\renewenvironment{test}[2]{
\newcommand\testlabel{#1}
\newcommand\testcaption{#2}
}{
\vspace{-1.0\baselineskip}
\captionof{test}{\testcaption}
\label{\testlabel}
}
\crefname{results}{Results}{Results}
\renewenvironment{results}[2]{
  \newcommand\resultslabel{#1}
  \newcommand\resultscaption{#2}
}{
  \vspace{-1.5\baselineskip}
  \captionof{results}{\resultscaption}
  \label{\resultslabel}
}
\let\currentfirstcolstyle\relax
\newcommand\firstcolstyle[1]{%
  \gdef\currentfirstcolstyle{#1}%
  #1\ignorespaces%
}
\newcommand\rowstyle[1]{%
  \gdef\currentrowstyle{#1}%
  #1\ignorespaces%
}
\newcolumntype{+}{>{%
  \global\let\currentrowstyle\relax%
  \currentfirstcolstyle%
}}
\newcolumntype{^}{>{%
  \currentrowstyle%
}}
\newcommand\Acronym[1]{%
  \expandafter\def\csname#1\endcsname{{\scshape #1}\xspace}%
  \expandafter\def\csname#1s\endcsname{{\scshape #1}s\xspace}%
}
\newcommand\AcronymNum[3]{%
  \expandafter\def\csname#1\endcsname{{\scshape #2$_{#3}$}\xspace}%
  \expandafter\def\csname#1s\endcsname{{\scshape #2$_{#3}$}s\xspace}%
}
\newcommand\dbpedia{\textsc{db}pedia\xspace}
\newcommand\URI {{\scshape uri}\xspace}
\newcommand\URIs{{\scshape uri}s\xspace}
\newcommand\URL {{\scshape url}\xspace}
\newcommand\WOLD{{\scshape wold}\xspace}
\newcommand\caWOLD{sa-{\scshape wold}\xspace}
\newcommand\WOLDs{{\scshape wold}s\xspace}
\newcommand\IRI {{\scshape iri}\xspace}
\newcommand\IRIs{{\scshape iri}s\xspace}
\newcommand\rt[1]{{\color{JungleGreen}\textbf{RT:} #1}}
\newcommand\bartm[1]{  \marginpar{\color{Cerulean}\textbf{BB:} #1}}
\font\uwavefont=lasyb10 scaled 700
\def\spelling{\bgroup\markoverwith{\lower3.5\p@\hbox{\uwavefont\textcolor{Red}{\char58}}}\ULon}
\def\formalism{\bgroup\markoverwith{\lower3.5\p@\hbox{\uwavefont\textcolor{Orange}{\char58}}}\ULon}
\def\grammar{\bgroup\markoverwith{\lower3.5\p@\hbox{\uwavefont\textcolor{LimeGreen}{\char58}}}\ULon}
\def\phrasing{\bgroup\markoverwith{\lower3.5\p@\hbox{\uwavefont\textcolor{RoyalBlue}{\char58}}}\ULon}
\newcommand\remove{\bgroup\markoverwith{\textcolor{red}{\rule[0.5ex]{2pt}{0.4pt}}}\ULon}
\tikzset{main node/.style={circle,fill=black!10,draw,minimum size=1cm,inner sep=0pt},
            }
\newtheorem{theorem}{Theorem}
\newtheorem{definition}{Definition}
\newtheorem{remark}{Remark}
\renewcommand\cite[1]{\citep{#1}}
\begin{document}

\lefttitle{Distributed Subweb Specifications for Traversing the~Web}

\jnlPage{1}{8}
\jnlDoiYr{2021}
\doival{10.1017/xxxxx}

\title[Distributed Subweb Specifications for Traversing the Web]{Distributed Subweb Specifications for Traversing the~Web
\thanks{
This research received funding from the Flemish Government under the ``Onderzoeksprogramma Artifici\"ele Intelligentie (AI) Vlaanderen'' programme and SolidLab Vlaanderen (Flemish Government, EWI and RRF project VV023/10).
Ruben Taelman is a postdoctoral fellow of the Research Foundation -- Flanders (FWO) (1274521N). Heba Aamer is supported by the Special Research Fund (BOF) (BOF19OWB16)
}
% \thanks{Research partially funded by
%       MINECO
%       MICINN PID2019-108528RB-C21 \emph{ProCode} project, FPU grant
%       16/04811}
}

\begin{authgrp}
\author{\sn{Bart} \gn{Bogaerts}}
\affiliation{Vrije Universiteit Brussel, Belgium}
\author{\sn{Bas} \gn{Ketsman}}
\affiliation{Vrije Universiteit Brussel, Belgium}
\author{\sn{Younes} \gn{Zeboudj}}
\affiliation{Vrije Universiteit Brussel, Belgium}
\author{\sn{Heba} \gn{Aamer}}
\affiliation{Universiteit Hasselt, Hasselt, Belgium}
\author{\sn{Ruben} \gn{Taelman}}
\affiliation{Ghent University -- imec -- IDLab}
\author{\sn{Ruben} \gn{Verborgh}}
\affiliation{Ghent University -- imec -- IDLab}
\end{authgrp}

\history{\sub{xx xx xxxx;} \rev{xx xx xxxx;} \acc{xx xx xxxx}}

\maketitle

% \heba{The top header is as mentioned by the template guideline. However, I do not think it makes sense to keep the number of pages. Should I update it accordingly}
% NO NEED. THIS WILL BE DONE FOR THE CAMERA READY VDERSION

\begin{abstract}
{
% Context:      Why the need is so pressing or important
Link Traversal--based Query Processing (\ltqp),
in which a~\sparql~query
is evaluated over a~web of documents
rather than a~single dataset,
is often seen as a~theoretically interesting yet impractical technique.
However,
in~a~time where the hypercentralization of data
has increasingly come under scrutiny,
a~decentralized Web of Data
with a~simple document-based interface is appealing, as
it enables data publishers to control their data and access rights.
% \linebreak
% Need:         Why something needed to be done at all
While \ltqp allows evaluating complex queries over such webs,
it suffers from performance issues
(due to the high number of documents containing data)
as well as information quality concerns
(due to the many sources providing such documents).
In existing \ltqp approaches, the burden of finding  sources to query is
entirely in the hands of the \emph{data consumer}.
% \bart{From here on... change needed}
% \linebreak
% Task:         What was undertaken to address the need
In this paper, we argue that to solve these issues, \emph{data publishers}
should also be able to suggest sources of interest and \emph{guide} the data consumer
towards relevant and trustworthy data.
%
% %
%
%
%
% To mitigate these issues,
% we created a~query semantics for contextualized subwebs,
% \linebreak
% which allows pruning the number of consulted documents
% based on whether their contents would produce results
% that are desirable within a~given context.
% \linebreak
% Object:       What the present document does or covers
We introduce a theoretical framework that enables such guided
link traversal
and study its properties.
% \linebreak
% Findings:     What the work done yielded or revealed
We illustrate with a theoretic example that this can improve query results and  reduce
the number of network requests.
% We~show that a~guided \ltqp approach
% generalizes existing reachability semantics,
% and can deliver results that satisfy distributed content policies
% % while eliminating network requests
% that would not contribute to the result~set.
% Conclusion:   What the findings mean for the audience
% \canbedropped{Through this~research,
% we open~up opportunities for advanced query processing
% in decentralized networks,
% attaining improved performance levels
% without prior centralization or indexing.} \todom{THink again about number of requests. THeoretical analysis?}
% Perspectives: What the future holds, beyond this work
% \bart{
% CAREFUl: theoretic experiment
% }
We evaluate our proposal experimentally on a virtual linked web with specifications and indeed observe that not just the data quality but also the efficiency of querying improves.
% \heba{should something about the experiment mentioned in the abstract?}
}
\end{abstract}
\begin{keywords}
\sparql, Link traversal--based query processing, web of linked data
\end{keywords}

\section{Introduction}
\label{sec:Introduction}

% \todo{Section needs to be rewritten}

The World-Wide Web provides a~permissionless information space
organized as interlinked documents.
The Semantic Web builds on~top~of~it
by~representing data
in a~machine-interpretable format,
fueled~by the Linked Data principles.
In contrast to more complex data-driven \apis,
the simplicity of document-based interfaces comes with multiple advantages.
They scale easily, and can be hosted on many different kinds of hardware and software;
we can realize the \emph{\enquote{anyone can say anything about anything}} principle
because every publisher has their own domain in the Web,
within which they can freely refer to concepts from other~domains;
and complex features such as access control or versioning
are technically easy to achieve on a~per-document basis.
%compared to more elaborate data interfaces with finer granularity.

However,
decentralized interfaces are notoriously more difficult to query.
As such,
the past decade has instead been characterized
by Big Data and hypercentralization,
in which data from multiple sources
becomes aggregated in an increasingly smaller number of sources.
While extremely powerful from a~query and analytics perspective,
such aggregation levels lead to a~loss of control and freedom
for individuals and small- to medium-scale data~providers.
This in~turn has provoked some fundamental
legal, societal, and economical questions
regarding the acceptability %\heba{desire? is desiredness a word?}  Made it acceptability -BB
of such hypercentral platforms.
As~such, there is again an increasing demand for more decentralized systems,
where data is stored closer to its authentic~source,
in line with the original intentions of the Web~\cite{verborgh_timbl_chapter_2020}.

As with Big Data,
query processing on the Semantic Web has traditionally
focused on \emph{single} databases.
The \sparql query language
allows querying such a single \rdf~store through the \sparql protocol,
which places significantly more constraints on the server
than a~document-based interface~\cite{verborgh_jws_2016}.
While \emph{federated} query processing enables
incorporating data from multiple \sparql~endpoints,
federated queries have very limited link traversal capabilities and \sparql endpoints easily
experience performance degradation~\cite{SparqlReadyForAction}.
% \marginpar{OLD: current engines can only handle a~relatively small number of sources
% before they experience serious performance degradation.}

Fortunately,
a~technique was introduced to query webs of data:
\emph{Link Traversal--based Query Processing} (\ltqp)~\cite{Hartig2009,Hartig2013},
in which an agent evaluates a~\sparql query
over a~set of documents that is continuously expanded
by \emph{selectively} following hyperlinks inside of~them.
While \ltqp demonstrates the independence of queries and selection of sources (on which these queries
need to be executed),
it has mostly remained a~theoretical exercise,
as its slow performance
makes it unsuitable for practical purposes.
The fact that \ltqp can yield more results
than single-source query evaluation,
gave~rise to~different notions
of \emph{query semantics} and \emph{completeness}~\cite{Hartig2012}.
While more data can be considered advantageous, it can also lead to doubts regarding
\emph{data~quality},
\emph{trustworthiness},
\emph{license compatibility},
or \emph{security}~\cite{ldtraversalsecurity}.
Together with performance,
these concerns seem to have pushed \ltqp to the~background.

% \bartm{I have tried to remove ``traversal'' and ``source selection'' etc.  Simply rephrased to ``subweb specification''}
In this article, we identify two limitations of existing \ltqp approaches.
Essentially, all existing \ltqp approaches identify a \emph{subweb} of the web of linked data on which a query needs to be executed. The first limitation is that \emph{the responsibility for defining how to construct this subweb
is entirely in the hands of the data consumer, from now on referred to as the \strong{\queryingagent}} (which can be an end-user or machine client). In other words, existing approaches make the assumption that the \queryingagent can determine
perfectly which links should be traversed. However, since every data publisher
can freely choose how to organize their data, we cannot expect a single agent to possess complete knowledge of how such traversals should proceed.
A second restriction is that current \ltqp formalisms provide an all-or-nothing approach: a document is either included in the subweb of interest in its entirety, or not at all, while for data-quality reasons, it would be useful to only take parts of documents into account.
For instance, an academic who has moved institutions might specify that the data provided by institution A is trustworthy up to a certain date and that for later information about them, institution B should be consulted. More radically, a certain end user might wish to specify that Facebook's data about who her friends are is correct, without thereby implying that any triple published by Facebook should be taken into account when performing a query.

In this paper, building on the use case of the next section, we propose an approach for \emph{guided} link traversal that overcomes these two limitations.
In our proposal, each data publisher has their own subweb of interest, and publishes a specification of how it can be constructed. They can use this for instance to describe the organization of their data,
or to describe parties they trust (as well as for which data they trust them).
The data consumer can then construct a subweb of interest \emph{building on} the subwebs of the publishers, e.g., deciding to include parts of a subweb, or to omit it. %, or to include parts of it.
As such, the data publishers \emph{guide} the data consumer towards relevant data sources.
% The data consumer can describe their own link traversal strategy as well, but in there refer to the strategies published by certain agents (e.g., expressing information such as ``traverse to all my friend's documents and follow their traversal strategies as well'').
%
% IRRELEVANT AND ALREADY SAID:
% At the same time,
% the recently renewed interest in decentralization---%
% especially for non-public information such as personal data---%
% calls for query solutions that respect \emph{data~boundaries}.
% As a~result,
% there is an increasing number of cases
% in which the Big Data prerequisite of centralization is organisationally undesired
% or even legally impossible.
% Such data integration scenarios cannot be tackled
% with one~large-scale \sparql~endpoint,
% simply because the data must not be brought together physically.
% This reaffirms the appeal of document-based interfaces with strict access control,
% while necessitating an appropriate query solution.
% Investigating the limitations of \ltqp thus becomes important.
%
% In this article,
% we study how context-dependent structural descriptions of data~subwebs
% can improve both query performance and result quality.
% Such descriptions enable a~\emph{guided}~\ltqp~approach~\cite{verborgh_amw_2020},
% in~which an agent follows links using an informed strategy,
% eliminating both irrelevant and undesired network requests
% to~deliver results satisfying certain~quality~guarantees.
We~focus on the theoretical foundations
and highlight opportunities for result quality and performance improvements.
We implemented our proposal and experimentally validated it on a crafted web of linked data, annotated with subweb specifications in our formalism.

The rest of this paper is structured as follows.
In \cref{sec:usecase}, we present a motivating use case; afterwards, in \cref{sec:prelims} we recall some basic definitions.
From our use case, in \cref{sec:requirements} we extract several desired properties.
Related work is discussed in light of the use case and the derived desired properties in \cref{sec:related}.
Our theoretical formalism is presented in \cref{sec:formalism}, and a concrete web-friendly syntax for it is discussed in \cref{sec:syntax}.
In \cref{sec:ldql}, we investigate to which extent existing link traversal formalisms can ``simulate'' the behaviour of our formalism. The answer is that even for very simple expressions, such simulations are not possible, thereby illustrating the expressive power of our new formalism.
We evaluate the effect of using subweb specifications on performance and on query result quality in \cref{sec:experiments}.
We end the paper with a discussion and a conclusion.

\paragraph{Publication History}
A short version of this paper was presented at the 2021 RuleML+RR conference \cite{ruleml/BogaertsKZATV21}.
This paper extends the short version with proofs and an experimental evaluation.

\section{Use Case}
\label{sec:UseCase}\label{sec:usecase}
As a~guiding example throughout this article,
we~introduce example data and queries for a~use~case that
stems from the Solid ecosystem~\cite{verborgh_timbl_chapter_2020},
where every person has their own \emph{personal data vault}.
% in which they store data pertaining to~them.
% Data is modeled as~\rdf
% and organized over multiple documents,
% % each of which has specific access rights.
% % This constraint prohibits centralizing all data in a~single system
% % for query purposes,
% % as this would require all people
% allowing each user to freely choose a
% % to rely on the same
% system for storing their data.
%
\begin{figure}[tb]
%
% FIRST COLUMN
%
\begin{minipage}[t]{.49\linewidth}
\begin{rdfdoc}{lst:Uma}{https://uma.ex/}
\begin{lstlisting}
<https://uma.ex/#me> foaf:knows
  <https://ann.ex/#me>, <https://bob.ex/#me>.
<https://bob.ex/#me> foaf:img <bob.jpg>.
\end{lstlisting}
\end{rdfdoc}
\begin{rdfdoc}{lst:Ann}{https://ann.ex/}
\begin{lstlisting}
<https://ann.ex/#me> foaf:isPrimaryTopicOf <https://corp.ex/ann/>.
<https://ann.ex/#me> foaf:weblog <https://ann.ex/blog/>.
<https://ann.ex/#me> foaf:maker <https://photos.ex/ann/>.
\end{lstlisting}
\end{rdfdoc}
\begin{rdfdoc}{lst:Bob}{https://bob.ex/}
\begin{lstlisting}
<https://bob.ex/#me> foaf:name "Bob";
  foaf:mbox <mailto:me@bob.ex>;
  foaf:img <funny-fish.jpg>.
<https://uma.ex/#me> foaf:knows
  <http://dbpedia.org/resource/Mickey_Mouse>.
<https://ann.ex/#me> foaf:name "Felix".
\end{lstlisting}
\end{rdfdoc}
\end{minipage}
%
% SECOND COLUMN
%
\begin{minipage}[t]{.49\linewidth}
\begin{rdfdoc}{lst:AnnDetails}{https://corp.ex/ann/}
\begin{lstlisting}
<https://ann.ex/#me> foaf:name "Ann";
  foaf:mbox <mailto:ann@corp.ex>;
  foaf:img <me.jpg>.
\end{lstlisting}
\end{rdfdoc}
\vspace{5em}
\vspace{0.3pt}
\begin{query}{qry:Friends}{Application query in \sparql}
\begin{lstlisting}[language=SPARQL]
SELECT ?friend ?name ?email ?picture WHERE {
  <https://uma.ex/#me> foaf:knows ?friend.
  ?friend foaf:name ?name.
  OPTIONAL { ?friend foaf:mbox ?email.
             ?friend foaf:img  ?picture. }
}
\end{lstlisting}
\end{query}
\end{minipage}
%
%
% BOTTOM ROW
%
\medskip
\begin{results}{tbl:Results}
  {%
    Possible results of \ltqp of the query in \cref{qry:Friends}
    with \url{https://uma.ex/} as seed
  }
  \codefont
  \begin{tabular}{|+r|^l|^l|^l|^l|}
    \toprule
    \rowstyle{\bfseries}
    \firstcolstyle{\bf}
      & ?friend               & ?name             & ?email               & ?picture                        \\
    \midrule
    1 & <https://ann.ex/\#me> & "Ann"             & <mailto:ann@corp.ex> & <https://corp.ex/ann/me.jpg>    \\
    2 & <https://bob.ex/\#me> & "Bob"             & <mailto:me@bob.ex>   & <https://uma.ex/bob.jpg>        \\
    3 & <https://bob.ex/\#me> & "Bob"             & <mailto:me@bob.ex>   & <https://bob.ex/funny-fish.jpg> \\
    4 & <https://ann.ex/\#me> & "Felix"           & <mailto:ann@corp.ex>  & <https://corp.ex/ann/me.jpg>    \\
    5 & dbr:Mickey\_Mouse     & "Mickey Mouse"@en & \itshape NULL        & \itshape NULL                   \\
    \bottomrule
  \end{tabular}
\end{results}
% \vspace{-10pt}
\end{figure}
Let us consider 3~people's profile documents,
stored in their respective data vaults.
Uma's profile (\cref{lst:Uma}) lists her two friends Ann and Bob.
Ann's profile (\cref{lst:Ann}) contains links
to her corporate page and various other pages.
Bob, a~self-professed jokester,
lists his real name and email address in his profile (\cref{lst:Bob}),
in addition to a~funny profile picture
and a~couple of factually incorrect statements
(which he is able to publish given the open nature of the~Web).
Note how Ann provides additional facts about herself
into the external document she links to (\cref{lst:AnnDetails}),
and Uma's profile suggests a~better profile picture for Bob (\cref{lst:Uma}).

Next,
we consider an \emph{address book} application
that displays the details of a~user's contacts.
At design-time,
this application is unaware of the context and data distribution
of the user and their friends.
If we assume Uma to be the user,
then the application's data~need can be expressed as \cref{qry:Friends},
which is a~generic \sparql template
in which only the \URL corresponding to Uma's identity (\url{https://uma.ex/#me})
has been filled~out.

With traditional \ltqp
(under \cAll~semantics~\cite{Hartig2012}),
results include those in \cref{tbl:Results}.
However,
the actually desired results are Rows 1 and~2,
which contain Uma's two friends
with relevant details.
Rows~3--5 are formed using triples that
occur in Bob's profile document
but are not considered trustworthy by Uma
(even though other triples in the same document are).
To obtain these results,
a~query engine would need to fetch at least 7~documents:
the profile documents of the 3~people (Uma, Ann, Bob),
the 3~documents referred to by Ann's profile (\cref{lst:Ann}),
and the \dbpedia~page for Mickey~Mouse.

%It is important to notice here that the subweb specificatoin Uma could 
%This use case also illustrates that Uma is incentivised to  would be willing to 

% \section{Requirements}
% To tackle the described use case, as well as many similar situations, we need a formalism to specify a subweb of the web of linked data. We first give some preliminaries and then describe our requirements for this formalism to satisfy.

\section{Preliminaries}\label{sec:prelims}
As a~basis for our data model of a~Web of Linked Data,
we use the \rdf data model~\cite{RDF}.
That is,
we assume three pairwise disjoint, infinite sets:
\uris (for \URIs),
\blanks (for blank~nodes),
\literals(for literals).
An \emph{\rdf triple} is a~tuple $\triple{s}{p}{o} \in \T$,
with \T the set of all~triples defined as
\[\T = (\uris\cup\blanks)\times\uris\times(\uris\cup\blanks\cup\literals);\] if $t=\triple{s}{p}{o}\in\T$, then $\urisof(t) = \{s,p,o\}\cap\uris$.
A set of triples is called a \emph{triple graph} or an \emph{\rdf graph}. An \emph{\rdf dataset} is a set of tuples \brak{\tuple{n_i,g_i}} such that $n_i\in \uris$ and $g_i$ an \rdf graph, where $g_0$ is referred to as the \emph{default graph}.

We assume another set \alldocs,
disjoint from the aforementioned sets \uris, \blanks and \literals,
whose elements are referred to as \emph{documents}.
The \rdf~graph contained in each document
is modeled by a~function $\data: \alldocs \rightarrow \tripleset$
that maps each document to a~finite set of triples.

\begin{definition}
	\label{defW}
  A \emph{Web of Linked Data (\WOLD)} $\wold$
  is a~tuple \tuple{\docs, \data, \adoc}
  where $\docs$ is a~set of documents $\docs\subseteq \alldocs$,
  \data a function from $\docs$ to $\tripleset$
  such that $\data(\doc)$ is finite for each $\doc \in \docs$,
  and $\adoc$ a~partial function from $\uris$ to $\docs$.
%   \\*
  If \wold is a \WOLD,
  we use $\docs_\wold$, $\data_\wold$, and $\adoc_\wold$ for its respective components.
%   \\*
  The set of all \WOLDs is denoted \allwolds.
\end{definition}

% \subsection{Subwebs and Subweb Construction}
We aim to define parts of a~web as~subwebs.
While existing definitions only consider
the inclusion of documents in their entirety~\cite{Hartig2012},
we allow for \emph{partial} documents
to enable fine-grained control about which data is to be used for answering certain queries.
% such that actors can consider documents only relevant
% for certain statements.

\begin{definition}
	Consider two \WOLDs
  $\wold = \tuple{\docs, \data, \adoc}$
	and
  $\wold'=\tuple{\docs', \data', \adoc'}$.
  We~say that $\wold'$ is a \emph{subweb} of \wold if
	\begin{enumerate}
		\item $\docs'\subseteq \docs$
		\item $\forall \doc \in \docs' : \data'(\doc) \subseteq \data(\doc)$
		\item $\adoc'(u) =\adoc(u)$ if $\adoc(u)\in\docs'$
      and $\adoc'(u)$ is undefined otherwise.
	\end{enumerate}
%   \\*
  We write \subwebs{\wold} for the set of subwebs~of~\wold.
\end{definition}

The simplest type of subwebs are those only consisting of a single document.
\begin{definition}
% \todo{URI OR DOCUMENT? DOC SEEMS ok}
    Let \wold be a \WOLD and $d\in\docs$. We use $\simpl(d,\wold)$ to denote the (unique) subweb $\tuple{\{d\}, \data', \adoc'}$ of \wold with $\data'(d)=\data(d)$.
\end{definition}
Additionally, if two subwebs of a given \WOLD are given, we can naturally define operators such as union and intersection on them; in this paper, we will only need the union.
\begin{definition}
 If $\wold_1$ and $\wold_2$ are subwebs of $\wold$, we define $\wold_1\cup\wold_2$ to be the unique subweb  $\tuple{\docs', \data', \adoc'}$ of \wold with
 \begin{itemize}
  \item $\docs' = \docs_{\wold_1}\cup\docs_{\wold_2}$, and
  \item $\data'(d) = \data_{\wold_1}(d) \cup \data_{\wold_2}(d)$  for each $d\in\docs'$, where, slightly abusing notation,  we use $\data_{\wold_i}(d)=\emptyset$ if $d\not\in\docs_{\wold_i}$.
 \end{itemize}
\end{definition}

% IF NEEDED, PUT THIS DEFINITION WHERE IT IS USED. PROBABLY CAN BE OMITTED
% In the following we formalize the idea of an object that appears in a \wold
%
%
%
% \begin{definition}
% 	We say that an object $a\in \uris\cup\literals\cup\blanks$ \emph{occurs} in
% 	\begin{itemize}
% 		\item a triple \tuple{t_1, t_2,t_3}: if some $t_i=a$,
% 		\item a graph $g$: if $\exists t\in g$ such that $a$ occurs in $t$,
% 		\item a \WOLD \wold=\brak{\docs,\data,\adoc}: if $\exists d\in \docs$ such that $a$ occurs in $\data(d)$,
% 		\item a dataset $ds$: if $a$ occurs in some graph of $ds$.
% 	\end{itemize}
% \end{definition}

\section{Requirements}
\label{sec:requirements}

From the use case, we extracted four requirements that motivate our definitions.  %presented in \cref{sec:formal}.
% \begin{compactdesc}
%  \item[A Declarative Language for Selecting Data Sources]
 \paragraph{A Declarative Language for Selecting Data Sources}
Similar to existing \ltqp approaches, we need  a language to describe which data sources to select (possibly starting from a given seed).
% As such, we need a language to describe such a subweb.
 We want such a   language to be declarative, i.e., focus on \emph{which} sources to use, rather than \emph{how}
 to obtain them.
 Formally, we expect a source selection expression to evaluate in a given WOLD to a set of \URIs representing the documents to be included.
%  On the formal level, evaluating a source selection expression should result in a set of URIs, which in turn determines a subweb.
% \item[Independence of Query and Subweb Specification]
\paragraph{Independence of Query and Subweb Specification}
Motivated by principles of reusability and separation of concerns, we want
the \emph{query} to be formulated independently from the \emph{subweb over which the query is to be evaluated}.
% A design is that we want the query to be executed and the description of the sources to be used for answering this query to be independent.
While it might --- to a certain extent --- be possible to encode traversal directions in (federated) \sparql queries, \emph{what do I want to know} and \emph{where do I want to get my information} are two orthogonal concerns that we believe should be clearly separated, in order to improve readability,  maintainability, and reusability.
For example, in the use case, the phone book application defines the \emph{query}, while Uma defines her own \emph{subweb of interest} (consisting of her own document, as well as parts of the documents of her friends).
The application should be able to run with different subwebs (e.g., coming from other users), and Uma's subweb of interest should be reusable in other applications.
% \rt{Can we make it more concrete as to \emph{why} we need this property? I.e., because different people may organize their data in different ways, but we want (applications) to be able to reuse queries.}
% \bart{Possibly, yes! I think this kind of independence fits well in the Solid view: the application specifies what it needs, the user specifies which data can be used. Nor just the query can be reused, but also the subweb construction!}
% % \todo{Refer back to use case; Mention social network application}
% \item[Scope Restriction of Sources]
\paragraph{Scope Restriction of Sources}
One phenomenon that showed up in the use case is that we want to trust a certain source, but only for specific data.
We might for instance want to use all our friends' data sources, but only to provide information about themselves.
This would avoid ``faulty'' data providers such as Bob from publishing data that pollute up the entire application, and it would give a finer level of control over which data is to be used to answer queries.
On the formal level, this requirement already manifests itself in the definition of \emph{subweb} we chose: contrary to existing definitions \cite{Hartig2012}, we allowed a document in a subweb to have only a subset of the data of the original document.
% \item[Distributed Subweb Specifications]
\paragraph{Distributed Subweb Specifications}
% \todo{Better term for ``Source Selection''? Is it what a query system does? Relation ``subweb construction''/Source selection policy}
% \rt{Instead of calling this "source selection", perhaps we want to state this from the perspective of data publishers. We could for example rephrase this requirement as "Distributed Description of Data Organization".}
% \bart{Sounds quite restrictive to me. One can do much more with that.
% I might use this kind of distribution to say that I trust Younes for information about himself. This is not a matter of how I organize my data, but a matter of my beliefs about realiability/trustworthyness/authority/...}
% \rt{"Source Selection" -> "Source Suggestion"? This would allow clients to select sources based on the publisher's suggested sources.}
% I tried to focus more on specification of subwebs. I don't like the ``suggestion'' since we are *defining' something, not ``suggesting'' it. Each agent defines a subweb of interest. So... Itried to limit the usage of ``source selection''.
Finally, we arrive at the notion of distribution.
This is the feature in which our approach most strongly deviates from the state-of-the-art in link traversal.
While the semantic web heavily builds on the assumption that \emph{data} is decentralized and different agents have different pieces of data to contribute, existing link traversal--based approaches still assume that the \emph{knowledge of where this data can be found} is completely in the hands of the querying agent at query time, or at least that the \emph{principles by which the web has to be traversed} can be described by the querying agent.
However, as our use case illustrates, this is not always the case: Ann decided to distribute her information over different pages; the agent developing the phone book application cannot possibly know that the triple
\lstinline{<https://ann.ex/#me> foaf:isPrimaryTopicOf <https://corp.ex/ann/>.} indicates that information from \lstinline{<https://corp.ex/ann/>} is ``equally good'' as information from Ann's main document.
Stated differently, only Ann knows how her own information is organized and hence if we want to get personal information from Ann, we would want her to be able to describe herself how or where to find this data.
To summarize, we aim to allow \emph{document publishers to publish specifications of subwebs in the same declarative language as used by the querying agents} and to allow \emph{querying agents to decide whether or not to include the data from such subwebs}.

Moreover, the fact that published subweb specifications can be used across different applications forms an incentive for document publishers to actually build and publish such specifications. 
For instance in our running example, if Uma publishes her subweb specficiations for the purpose of the described phone book application, and later she wants to use a different application, e.g., a social media app, to access her friends (and their data), she can use the same specifications and simply link to her own data pod.

% \end{compactdesc}

%%%%%%%%%%%%%%%%%%%%%%%%   OLD %%%%%%%%%%%%%%%%%%%%%%%%%%%%%%%%
\ignore{
We further assume, for the sake of simplicity,
that every document can be traced back to a~\datapublisher,
who makes a~description of their context available.
This description could state,
in terms of subwebs,
for what kind of data
the \datapublisher considers certain sources or documents
authoritative, trustworthy, license-compatible, etc.
For~example,
Ann could state that she considers any document under the domain \url{corp.ex}
to be authoritative for triples of the form
$\triple{\textit{https://ann.ex/\#me}}{p}{o}$,
or in~fact any other graph pattern or constraint of her choice.
Therefore,
it makes sense to define how a~context
yields subwebs per document,
containing the data that it considers in~scope.
}

\section{Related Work}
\label{sec:RelWork}
\label{sec:related}

\ignore{
\paragraph{Linked Data on the Web}\bartm{Can we move this section to the back of the paper? No need to have it now, it seems}\bartm{Is this paragraph relevant for the paper? THe intro contains enough motivation for linked data already... }\rt{It gives a nice intro to the next LTQP paragraph, but I'm ok with removing it if we need the space. Not sure if we want to move it to the back of the paper though. I feel like this section makes our motivations a bit stronger based on how it is written now. But I have no strong opinion on this.}
Publishing \rdf data following the Linked Data principles
results in each resource being described in a~separate document,
which links to resources in other documents
that can be retrieved via \http requests.
Since this approach embraces the fundamental Web principles,
it can be hosted with only limited~effort.
As~a~result,
this document-oriented interface can realize orthogonal features,
such as granular authentication or versioning,
on the \http~level.
While Linked Data is often associated with open~data,
cases for personal, private, or
otherwise restricted environments exist~\cite{verborgh_jws_2016}.
These restrictions are interesting
because merging such sources before querying
might be disallowed for~legal or other reasons;
hence, those sources need to be queried in-place.
}

\paragraph{Web Decentralization}
To counter the various issues surrounding centralized data management,
efforts such as Solid~\cite{verborgh_timbl_chapter_2020}, Mastodon~\cite{mastodon}, and others~\cite{decentralizednanopubs}
aim to decentralize data on the Web.
While approaches such as Mastodon aim to split up data into several federated instances,
Solid takes a more radical approach, where each person can store data in a personal data vault,
leading to a wider domain of decentralization.
Solid is built on top of a collection of open Web standards~\cite{spec:solidprotocol},
which makes it an ecosystem in which decentralized applications can be built.
This includes specifications on how data can be exposed through the HTTP protocol~\cite{spec:ldp},
how to manage authentication~\cite{spec:solidoidc} and authorization~\cite{spec:wac,spec:acp},
and representing identity~\cite{spec:webidprofile}.
Our approach for enabling the publication of subwebs of interest has precedent in the Solid ecosystem,
since approaches such as Type Indexes~\cite{spec:typeindex} and Shape Trees~\cite{spec:shapetrees} already exist
that enable data discovery in data vaults by type or data shape.
Our approach differs from these approaches in the fact that we use this published information
for link \emph{pruning} instead of link \emph{discovery}.

\paragraph{Link Traversal-based Query Processing}
Over a~decade ago, the paradigm of Link Traversal-based Query Processing was introduced~\cite{Hartig2009},
enabling queries over document-oriented interfaces.
The main advantage of this approach is that queries can always be executed over live data,
as opposed to querying over indexed data that may be stale.
The main disadvantages of this approach are that query termination and result completeness are not guaranteed,
and that query execution is typically significantly slower
than database-centric approaches such as \sparql endpoints.
Several improvements have been suggested to cope with these problems~\cite{Hartig2013}.
For example, the processing order of documents can be changed so that certain documents are \emph{prioritized}~\cite{WalkingWithoutMap},
which allows relevant results to be emitted earlier in an iterative manner~\cite{Squin},
but does not reduce total execution time.
In this work,
we propose to tackle this problem by allowing publishers to specify their subweb of interest.
These specifications are then used to
\emph{guide} the query engine towards relevant (according to the data publishers at hand) documents.
% (at least, these subwebs and enable \emph{pruning} away links and document that are not of interest.
%
\ltqp is related to the domain of focused crawlers~\cite{focusedcrawling,focusedcrawlingimproving},
which populate a local database by searching for specific topics on Web pages.
It is also related to SQL querying on the Web~\cite{infogatheringwwww3ql,queryingwwwsql},
which involves querying by attributes or content within Web pages.
In contrast to these two related domains, \ltqp is based on the RDF data model,
which simplifies data integration due to universal semantics. 

\paragraph{Reachability Semantics}
%  \bartm{Do we still need this paragraph now that we have a \emph{formal} comparison with LDQL (and LDQL generalizes reachability-based querying)?
%  The Call vs CMatch comparison is interesting but might be moved to the use case. }
%  \rt{Given the importance of reachability semantics in the related work (even more than LDQL?), I would definitely mention it explicitly here. I think the comparison of cAll and cMatch makes sense here, but I'm open to moving it to the use case section, no strong opinion on that.}
The \sparql query language
was originally introduced for query processing over \rdf~databases.
Since \ltqp involves a~substantially different kind of sources,
a family of new semantics was introduced~\cite{Hartig2012},
involving the concept of a~\emph{reachable}~subweb.
When executing a query over a set of \emph{seed~documents},
the reachable~Web is the set of documents
that can be reached from these seeds
using one of different \emph{reachability~criteria}.
These criteria are functions
that test each data triple within retrieved documents,
indicating which (if any) of the \URIs in the triple components
should be dereferenced
by interpreting them as the \URI of a~document
that is subsequently retrieved over~\http.
The simplest reachability criterion is \cNone,
where none of the \URIs from the seed documents is followed,
such that the query will be executed over the union of triples across the seeds.
\Cref{qry:Friends} under \cNone with \cref{{lst:Uma}} as seed
would thus not yield any results.
The~\cAll reachability criterion involves following all encountered \URIs,
which is the strategy in the example of \cref{tbl:Results}.
A more elaborate criterion is \cMatch,
which involves following \URIs from data triples
that match at least one triple pattern from the query.
\cMatch~can significantly reduce the number of traversals compared to \cAll.
However,
evaluating \cref{qry:Friends} with \cMatch semantics
would not yield results for~Ann (rows 1 and~4).
Her details are only reachable
via a~triple with predicate \verb!foaf:isPrimaryTopicOf!,
which does not match any of the query's triple patterns;
hence, the relevant document is never~visited.
So~while \cMatch can lead to better performance,
it comes at the cost of fewer results,
showing that none of these approaches is optimal.
% To support the approach that we introduce in this article,
% we need to derive a more generic formalization framework.

\ignore{
\paragraph{\ltqp Query Syntax}
Since \sparql has not been designed for \ltqp,
it lacks the ability to represent traversal instructions.
To cope with this problem,
several alternative approaches either modify the semantics of \sparql clauses~\cite{SPARQLLD,LDPP},
or introduce non-\sparql-based languages~\cite{ldql,NautiLOD}.
While incorporating traversal into the query language
is advantageous in terms of expressivity,
it requires queries to be overly specific to a~certain context.
In contrast,
decentralized applications typically only know the desired data patterns in advance,
but not the user-specific context.
As~such, we decouple the query and traversal context at~design time,
such that they can be combined by query engines at~runtime.
Moreover, this context can be composed
out of smaller context from distributed actors,
such that no actor needs knowledge about the entire network.
}

%\bas{We should mention also the Webdamlog project.}\bart{Can you propose a paragraph?}
% \new{
\paragraph{Delegation} The concept of subwebs is somewhat related to the presence of active rules in rule-based languages for distributed data management. A particularly relevant project in this context is Webdamlog~\cite{DBLP:conf/pods/AbiteboulBGA11}, a Datalog-based declarative language for managing knowledge on the web with support for rule-delegation. Here, delegation is achieved by allowing rules to get partially materialized by different peers.
% }

\section{A Formalism for Subweb Specifications}
\label{sec:Formalization}
\label{sec:formal}\label{sec:formalism}
Inspired by the desired properties from \cref{sec:requirements}, we now define a
formalism to describe subwebs of interest.
In our formalism, different agents will be able to provide a description of a \emph{subweb of interest}; they will be able to specify declaratively in (which parts of) which
documents they are interested.
We do not make any assumption here about what the reason for this ``interest'' is; depending on the context at hand, different criteria such as relevance, trustworthiness, or license-compatibility can be used.
Such a description of a subweb of interest can be given by the  \strong{\queryingagent} (an end-user or machine client)
which provides it at runtime to the \strong{\queryprocessor}.
Additionally,
every \strong{\datapublisher} can use the same mechanism
to make assertions about their beliefs,
such that other \datapublishers or \queryingagents can reuse those
instead of requiring explicit knowledge.
For instance,
a~\datapublisher can express which sources they consider
relevant or trustworthy for what kinds of data:
a~researcher might indicate that a~certain source
represents their publication record correctly,
whereas another source captures their affiliation history.
A certain agent
\emph{might} or \emph{might not}
choose to take the subweb of interest of a~\datapublisher into consideration.
In the use case of~\cref{sec:UseCase},
the application generates a query \sparqlex as \cref{qry:Friends},
and end-user Uma expresses that she trusts her own profile for her list of contacts,
and that she trusts those contacts for their own details.
Furthermore, each of these friends can indicate which other documents they trust for which information. For instance, Ann~could express that she trusts \url{corp.ex}
for her personal details.
Essentially, in this case Uma partially \emph{delegates} responsibility of traversing the web to Ann, but
only retains information about Ann from Ann's subweb of interest. %taking into account that from all documents of interest according to Ann, only information about Ann will be taken into account.
This leads to the following definitions.

\newcommand{\mydef}{\mathrel{:=}}
\begin{definition}
    A \emph{source selector} is a function $\sselect:\allwolds\to2^\uris$.
\end{definition}
\begin{definition}
    A \emph{filter} is a function  $f: 2^\alltriples \times \uris \to 2^\alltriples$ such that $f(S,u)\subseteq S$ for every  $S  \subseteq \alltriples$ and $u\in\uris$.
    For a \WOLD $\wold=\tuple{\docs, \data, \adoc}$ and \URI $u$;
    we extend the notation and also write $f(\wold, u)$ to denote the subweb
    $ \tuple{\docs, \data', \adoc}$ of $\wold$
    with $\data'(d) \mydef f(\data(d), u)$ for each $d \in \docs$.
%\old{$f(\wold, u)$ to denote the wold obtained after applying $f$ over every document in $\wold$. That is, $f(\wold, u) = \tuple{\docs, \data', \adoc}$ with  $\data'(d) = f(\data(d), u)$ for every $d \in D$.}
\end{definition}
% \rt{Reviewers might not like our "slightly abusing notation" in the definition above.}\bas{Nothing strange is happening here, but perhaps we can say that ``we extend the notation''?}
    In our running example, if Uma wants for each of her friends to only include statements they make about themselves, she can use a source selector $\sselect$ that extracts her friends, e.g, with $\sselect(W) = \{ o \mid \triple{s}{ \text{\lstinlineProxy{foaf:knows}}}{o} \in \data(\adoc(s))\text{ with }s = \text{\lstinlineProxy{<https://uma.ex/#me>}}\}$ and with a filter
 that maps $(S,u)$ to $\{\triple{s}{p}{o} \in S \mid s = u\}$.
If we assume that $W$ is a \WOLD in which only a particular friend $u$ of Uma provides triples, then $f(W, u)$ is the subweb of $W$ in which friend $u$ has only the triples making statements about him or herself.
% \heba{Should not the two tuples $<,,>$ in the above example be triples $(,,)$?} CORRECT -BB

% Notice that $f(\wold, u)$ is always a subweb of $\wold$, for every \WOLD $\wold$ and uri \uri.
% \old{When extended to \allwolds, it is clear that $f$ maps $(\wold,\uri)$ to a subset of \wold for each \wold and \uri.}

% \todo{subset specification becomes a set of triplesf. use consisently}
\begin{definition}
 A \emph{subweb specification}, often denoted $\cw$, is a set of tuples of the form $(\sselect,b,f)$, where
%    \emph{subweb construction} is a triple $(s,b,f)$ with
%  \begin{itemize}
%   \item
  $\sselect$ is a source selector;
      %function $\allwolds\to2^\uris$ (called the \emph{source selector}),
%   \item
$b$ is a Boolean; and
        %Boolean, and
%   \item
$f$ is a filter.
%  \end{itemize}
\end{definition}
% As mentioned before,
% we expect \datapublishers to provide their own \emph{constructive rules}
% of how their subweb of interest can be built.
% When taking them together, we obtain a subweb specification.
% The set of such rules is formalized here as a \emph{subweb construction}.
Intuitively, the Boolean $b$ in $(\sselect,b,f)$ indicates whether to include for each \URI $u\in \sselect(\wold)$ (the filtered version of) the subweb of $\adoc(u)$ or only $u$'s document.
Finally,
this brings us to the definition of a~specification-annotated~\WOLD (\caWOLD in short): a \WOLD
extended with the knowledge of how to construct the subweb of all \datapublishers.
% \heba{only place mentioning the term `construction rule', is this okay?} FIXED -BB
%
\begin{definition}\label{def:ContextualizedWold}
  A~\emph{specification-annotated \WOLD} (\caWOLD in short) is
    a tuple $\ctxwold=\tuple{\wold, \cwtuple}$ consisting of a \WOLD $\wold=\tuple{\docs,\data,\adoc}$ and an associated family $\cwtuple=\left(\cw_\doc\right)_{\doc\in\docs}$ of subweb specifications.
    %For convenience of notation, we will often write $\ctxwold$ as shorthand for $(\wold, \cwtuple)$.
\end{definition}
In a \caWOLD, each \datapublisher declares their subweb specification that can be used to construct their subweb of interest.
The value of a subweb specification in a \caWOLD is defined as follows:

\begin{definition}
    Let $\ctxwold=\tuple{\wold,\cwtuple}$ be a \caWOLD with $\wold = \tuple{\docs,\data,\adoc}$,
    and $\Theta$ a subweb specification. Then, $\constrEval{\Theta}{\ctxwold}$ denotes the subweb specified by $\Theta$ for $\ctxwold$,
    \[
    \constrEval{\Theta}{\ctxwold} \mydef \bigcup_{(\sselect,b,f) \in \cw}\ \bigcup_{ u\in \sselect(\wold)}
        f\left(\simpl(\adoc(u),\wold)\cup \left(\constrEval{(\Theta_{\adoc(u)})}{\ctxwold} \textbf{ if }b\right), u  \right) ,
    \]
    where $(S\textbf{ if } b)$ equals $S$ if $b$ is true and the empty \WOLD (the unique \WOLD without documents) otherwise.
    The \emph{subweb of interest of a document $d\in \docs$ in \ctxwold} is defined as $\soi(d,\ctxwold) \mydef \simpl(d,\wold) \cup \constrEval{\Theta_d}{\ctxwold}$.

%    \begin{align*}
%        \soi(d,\ctxwold) &\mydef \simpl(d,\wold) \cup \bigcup_{c\in \cw_\doc} \constrEval{c}{\ctxwold}\text{, with} \\
%        \constrEval{(s,\textsf{false},f)}{\ctxwold} &\mydef \bigcup_{u\in s(\wold)} f\left(\simpl\left(\adoc(u), W_{\cwtuple}\right), u\right) \text{, and}  \\
%        \constrEval{\tuple{s,\textsf{true}, f}}{\ctxwold} &\mydef
%            \bigcup_{u\in s(\wold)} f\left(\soi\left(\adoc(u), W_{\cwtuple}\right), u\right).
%    \end{align*}
%
% Let \ctxwold be a \caWOLD with $\wold = \tuple{\docs,\data,\adoc}$, $d$ a document in $\docs$ and  $(s,b,f)$  a subweb construction.
%
%    The subweb of interest of $d$ in \ctxwold (denoted $\soi(d,\ctxwold)$) and the value of $(s,b,f)$  in $\ctxwold$  (denoted \constrEval{(s,b,f)}{\ctxwold}) are subwebs of \wold defined as follows:
%\begin{align*}
% \soi(d,\ctxwold) &= \simpl(d,\wold) \cup \bigcup_{c \in \cw_\doc} \constrEval{c}{\ctxwold}\\
% \constrEval{(s,\lfalse,f)}{\ctxwold} &= f\left( \bigcup_{u\in s(\wold)} \simpl(\adoc(u),\wold)  \right) \\
%  \constrEval{(s,\ltrue,f)}{\ctxwold} &= f\left( \bigcup_{u\in s(\wold)} \soi(\adoc(u),\ctxwold)  \right)
%\end{align*}
%
\end{definition}

Since not just the \datapublishers, but also the \queryingagents should be able to specify a subweb of interest,  we naturally obtain the following definition.

\begin{definition}
 A \emph{specification-annotated query} is a tuple $\ctxqry = \tuple{\sparqlex,\cw}$ with $\sparqlex$ a \sparql query and
    $\cw$ a subweb specification.
%     \bas{I'd propose to update $\ctxqry$ the same way as ctxwold to make the notation less cluttered.}\bart{In principle ok. But... the bold vs non-bold is not always so easy to see I find. Can we use $\mathcal{W} \mathcal{P}$ or $\mathbb{W}\mathbb{P}$? }\bas{I've set them to mathbb to try it out (these are macro's; easy to change)}
%
The \emph{evaluation} of \ctxqry in \ctxwold,
denoted \evaluation{\ctxqry}{\ctxwold}, is defined by $
% \begin{align*}
    \evaluation{\ctxqry}{\ctxwold}
%     &
\mydef
    \evaluation{\sparqlex}{\constrEval{\Theta}{\ctxwold}}
    $
%    \evaluation{\sparqlex}{W'}\\
%    \text{where } W'&=\bigcup_{c\in\cw}\constrEval{c}{\ctxwold}.
% \end{align*}
\end{definition}
Here, we use $\evaluation{P}{W'}$ to denote the evaluation of the \sparql query in the dataset that is the union of all the documents in $W'$ (to be precise, this is the RDF dataset with as \emph{default  graph} the union of all the data in all documents of the subweb, and for each \URI \uri with $\adoc(u)=d$ a \emph{named graph} with name \uri and as triples the data of $d$ \cite{RDF}). %\heba{the terminology named graph is not mentioned before.}
Of course, we need a mechanism to \emph{find} all those documents, which is what  $\cw$ will provide.

% In such a \caWOLD construction rules can be evaluated (

%BART: Actually, we do not even really need to say this.
%It follows implicitly from the definition of contextualized wold that \cwtuple is a collection of subweb constructions. But, it might be useful to make this explicit.

In the next section,  we propose a concrete \sparql-based instantiation of the theoretical framework presented here and illustrate our use case in that setting. Afterwards, we will formally compare our proposal to existing  \ltqp approaches.
\section{Expressing Subweb Specifications}\label{sec:syntax}
% \label{sec:Syntax}
% \label{sec:syntax}
\newcommand\ssl\swsl

In this section, we propose a~syntax for subweb specifications
(as formalized in \cref{sec:Formalization}), named the Subweb Specification
Language~(\ssl), inspired by \ldql and~\sparql.
% \rt{The SSL acronym is perhaps a bit unfortunate, given the prevalence of SSL in HTTPS. But I agree there is a rename needed. Perhaps SWSL (SubWeb Specification
% Language) would work?}
In order to lower the entry
barrier of this syntax to existing \sparql engine  implementations, we
deliberately base this syntax upon the \sparql grammar.  This enables
implementations to reuse (parts of) existing \sparql query parsers and
evaluators.

The grammar below represents the \ssl syntax in Extended Backus--Naur form (EBNF) with start~symbol~$\bnfpn{start}$.  The specifications begin with the $\bnfts{FOLLOW}$ keyword, followed by a $\bnfpn{sources}$ clause, an
\emph{optional} $\bnfts{WITH SUBWEBS}$ keyword, and an \emph{optional}
$\bnfpn{filter}$ clause.
\begin{bnf*}
  \bnfprod{start}{
%     [\bnfts{SEED}\bnfsp \bnfpn{uris}] \bnfsp
    \bnfts{FOLLOW} \bnfsp \bnfpn{sources} \bnfsp [\bnfts{WITH SUBWEBS}] \bnfsp [\bnfpn{filter}]
  }\\
%   \bnfprod{uris}{\bnfpn{URI}\bnfor \bnfpn{URI}\bnfpn{uris}}\\
  \bnfprod{sources}{
    \bnfpn{variables} \bnfsp \bnfts{\{} \bnfsp \bnfpn{GroupGraphPattern} \bnfsp \bnfts{\}} \bnfsp [\bnfpn{recurse}]
  }\\
  \bnfprod{variables}{
    \bnfts{?}
    \bnfpn{VARNAME}
    \bnfor
    \bnfts{?}
    \bnfpn{VARNAME} \bnfsp \bnfpn{variables}
  }\\
  \bnfprod{recurse}{
    \bnfts{RECURSE} \bnfsp [\bnfpn{INTEGER}]
  }\\
  \bnfprod{filter}{
%     \bnfpn{include-extended}
%     \bnfor
%     \bnfpn{include-compact}
%   }\\
%   \bnfprod{include-extended}{
    \bnfts{INCLUDE} \bnfsp \bnfpn{ConstructTemplate} \bnfsp [\bnfts{WHERE} \bnfsp \bnfts{\{} \bnfsp \bnfpn{GroupGraphPattern} \bnfsp \bnfts{\}}]
  }\\
%   \bnfprod{include-compact}{
%     \bnfts{INCLUDE WHERE} \bnfsp \bnfpn{ConstructTemplate}
%   }\\
\end{bnf*}
% \vspace*{-0.9cm}

% \rt{Is there a reason why WITH SUBWEBS is now defined at the top-level, instead of separately on each variable (like before)? This seems like a an unneeded restriction to me, as you may want to include subweb specifications for some variables, but not all.}

%  \bart{Would it not be more logical to simply have
%  \[     \bnfts{INCLUDE} \bnfsp \bnfpn{ConstructTemplate} \bnfsp [\bnfts{WHERE} \bnfsp \bnfts{\{} \bnfsp \bnfpn{GroupGraphPattern} \bnfsp \bnfts{\}}]
% \]
% where the last is thus optional? }

Intuitively, a full \ssl expression corresponds to a single subweb specification
tuple $(\sselect,b,f)$ where the $\bnfpn{sources}$ clause correspond to
the source selection function $\sselect$, the keyword $\bnfts{WITH SUBWEBS}$
corresponds to the Boolean $b$, and the $\bnfpn{filter}$ clause corresponds
to the filter function $f$.  We explain each of these parts in more detail
hereafter.

\paragraph{Selection of Sources}
The $\bnfpn{sources}$ will be evaluated in the context of a set $S$ of seed documents.
For subweb specifications provided to the query processor, this set of seeds will be given explicitly, whereas for subweb specifications found in a  document, the set $S$ is comprised of the \URI of that document.
A $\bnfpn{sources}$ clause begins with a list of \sparql variables,
followed by a source extraction expression defined as \sparql's
$\bnfpn{GroupGraphPattern}$ clause. The output is a set of bindings of the given variables, indicating
\URIs whose documents are to be included.
%
%
% of $\bnfpn{GroupGraphPattern}$
% is a set of mappings to the list of variables appearing in its body (which is a
% superset of the variables appearing in the beginning of the $\bnfpn{sources}$
% clause).  These mappings indicate which \URIs whose documents must be included
% in the subweb of the current document based on matches from the
% $\bnfpn{GroupGraphPattern}$.
%
For instance, when evaluating the
% To
% connect it to the semantics, consider the following example on
expression %, where the $\bnfts{?}v$s are variables and $G$ is
% a $\bnfpn{GroupGraphPattern}$:
$\bnfts{?}v_1 \bnfsk \bnfts{?}v_n \bnfsp \bnfts{\{} \bnfsp G \bnfsp \bnfts{\}}$
in a \WOLD $\wold$ with seed set $S$, the resulting source selection is
% that indicates
% the \URI of the current document where this specification is defined, its
% evaluation yields the following $\sselect$ function (where $\GGPEval{G}{S}$ is
% the evaluation of the GroupGraphPattern $G$ on a dataset $S$):
\[\sselect(\wold) = \bigcup_{\uri \in S}  \{  \mu(v_i)  \in \uris  \mid 1\leq i\leq n \land   \mu \in \GGPEval{G}{\data(\adoc(\uri)) }\}, \]
where $\GGPEval{G}{DS}$ is
the evaluation of the GroupGraphPattern $G$ on a dataset $DS$, i.e., a set of bindings $\mu$ (mappings from variables to $\uris\cup\blanks\cup\literals$).
% \heba{first time $\mu$ appearing, should we add what is it?}

\paragraph{Recurring Source Selection}
A $\bnfpn{sources}$ clause may have at the end an optional $\bnfpn{recurse}$
clause. If $\bnfts{RECURSE}$ is not used in a specification, then this
latter will only apply to the document in which it is defined;
else, the specification will apply to that document,
and all output \URIs, taken as seed (recursively).
In other words, the $\bnfpn{sources}$ clause will be applied to all
documents that are obtained when following a
chain of one or more links using the specification.  The
$\bnfpn{recurse}$ clause has an optional nonnegative integer parameter,
which indicates the maximum recursion \emph{depth}.
A depth of $0$ is equivalent to not defining the $\bnfpn{recurse}$ clause.
A depth of $m$ means that all documents that
are obtained when following a link path of length $m$ from the seeds are considered.
This recursion capability calls for the need to express \emph{the current
document's \URI}. To achieve this, \ssl syntax reuses \sparql's relative \IRI
capability.  Concretely, every time an \ssl specification is applied
on a document, the document's \URI will be set as base \IRI to the \ssl
specification, so that relative \IRIs can be resolved upon this \IRI.
%LAST SENTENCE CAN BE REMOVED IF NEEDED

\ignore{
A BIT TOO DETAILED???
To connect it to the semantics, consider the following example on
$\bnfpn{sources}$ expression with $\bnfpn{recurse}$ clause:
\[\bnfts{?}v_1 \bnfsk \bnfts{?}v_n \bnfsp \bnfts{\{} \bnfsp G \bnfsp \bnfts{\}} \bnfsp \bnfts{RECURSE}\]

Given a \WOLD $\wold = \tuple{\docs,\data,\adoc}$, and \URI \uri that indicates
the \URI of the current document where this specification is defined, its
evaluation yields a $\sselect$ function that is defined in terms of
another function $links(\wold,m)$ such that:
\[links(\wold, m) = \bigcup_{i \in \{1, \ldots , n\}}  \{ v \in \uris \mid \text{there exists a mapping } \mu \in \GGPEval{G}{S} \text{ with } \mu(v_i) = v\}\]
where $S = \data(\adoc(\uri))$ for $m = 0$, while $S = \bigcup_{\uri' \in links(\wold, m-1)} \data(\adoc(\uri'))$ for $m > 0$.

Now, $\sselect$ can be defined as $\sselect(\wold) = \bigcup_{i \in \{0, \ldots \}} links(\wold, i)$. Of course, when a limit $M$ is given in the $\bnfpn{recurse}$
clause, then $\sselect(\wold) = \bigcup_{i \in \{0, \ldots , M\}} links(\wold, i)$.
}

\paragraph{Inclusion of Subwebs of Selected Sources} This is determined by the
optional keyword $\bnfts{WITH SUBWEBS}$. Thus, if an \ssl specification has
the $\bnfts{WITH SUBWEBS}$ option, this is equivalent to a subweb specification
tuple with $b$ is $\ltrue$. Otherwise, $b$ is $\lfalse$.

\paragraph{Document Filtering}
The $\bnfpn{filter}$ clause is an optional clause indicating
that only certain parts of the document are considered.
Without this
clause, the entire document is included. The $\bnfpn{filter}$ clause is
similar to \sparql's $\bnfpn{ContructQuery}$ clause. It exists in
\emph{compact} or \emph{extended} forms; in the latter, filtering constraints can be added via $\bnfts{WHERE}$ keyword.

Concretely, the extended form is defined by the \sparql's
$\bnfpn{ConstructTemplate}$ and $\bnfpn{GroupGraphPattern}$ productions.  The
$\bnfpn{ConstructTemplate}$ acts as a template of triples to accept,
while the $\bnfpn{GroupGraphPattern}$ imposes conditions to do so.
%
%
%
%
% The $\bnfpn{GroupGraphPattern}$ will be used in
% conjunction with the $\bnfpn{ConstructTemplate}$ to form a query over the
% respective document, while the $\bnfpn{ConstructTemplate}$ will be instantiated
% based on the mappings resulting from evaluating the
% $\bnfpn{GroupGraphPattern}$.  The conjunction of the
% $\bnfpn{GroupGraphPattern}$ and $\bnfpn{ConstructTemplate}$ is taken to ensure
% that produced triples must occur in the document.  This process is similar to
% that of a \sparql $\bnfts{CONSTRUCT}$ query, with the conjunction of
% $\bnfpn{GroupGraphPattern}$ and $\bnfpn{ConstructTemplate}$ is being added.
% The instantiated $\bnfpn{ConstructTemplate}$ acts here as a filtered document,
% which will be used by the query engine instead of the original document.
It is also possible that in the bodies of the $\bnfpn{GroupGraphPattern}$ and
$\bnfpn{ConstructTemplate}$ there are variables that are mentioned in the
$\bnfpn{GroupGraphPattern}$ of $\bnfpn{sources}$ clause.  This implies that they
should be instantiated according to the result of the first
$\bnfpn{GroupGraphPattern}$.

The compact form is defined by $\bnfpn{ConstructTemplate}$, which acts
as syntactical sugar to the extended with an empty $\bnfpn{GroupGraphPattern}$. Thus, to define $\bnfpn{filter}$ clause's semantics, we
only need the extended form. To illustrate this, consider an expression
%
% To connect it to the semantics, consider the following example on
% $\bnfpn{filter}$ expression, where $C$ is a $\bnfpn{ConstructTemplate}$ and
% $G_2$ is a $\bnfpn{GroupGraphPattern}$:
\[\bnfts{FOLLOW}\bnfsp \bnfts{?}v_1\bnfsp \bnfts{\{}\bnfsp G_1\bnfsp \bnfts{\}}\bnfsp\bnfts{INCLUDE} \bnfsp C \bnfsp \bnfts{WHERE} \bnfsp \bnfts{\{} \bnfsp G_2 \bnfsp \bnfts{\}}\]
We already saw that when evaluated in context \uri, this induces a source selector selecting those $v$ such that $\mu_1(\bnfts{?}v_1)=v$, for some $\mu_1 \in \GGPEval{G_1}{\data(\adoc(\uri))}$.
The associated filter is
%
% Given a \WOLD $\wold = \tuple{\docs,\data,\adoc}$, a \URI \uri that indicates
% the \URI of the current document where this specification is defined, and $G_1$
% is the $\bnfpn{GroupGraphPattern}$ of the $\bnfpn{sources}$ clause, the
% evaluation of this $\bnfpn{filter}$ expression yields the following filter
% function $f$ for every \URI $v \in \sselect(W)$ and any dataset $S$:
\[f(S,v) = \bigcup_{\mu_1 \in \GGPEval{G_1}{\data(\adoc(\uri))}\mid \mu_1(\bnfts{?}v_1)=v}  \{ t\in S \mid
t \in \GGPEval{\mu_2(\mu_1(C))}{S}
\text{ for some } \mu_2 \in \GGPEval{\mu_1(G_2)}{S} \}\]

% where $\GGPEval{C}{S}$ is the evaluation of the ConstructTemplate $C$ on a
% dataset $S$. It returns the subset of $S$ that matches the triples in $C$.
% Moreover, applying a mapping $\mu$ on a GroupGraphPattern $G$ (or a
% ConstructTemplate $C$), results in a GroupGraphPattern (or a ConstructTemplate,
% respectively) where the variables in $G$ (or $C$) replaced by their values in
% $\mu$ if they exist.

\paragraph{Expressing Document Subwebs}
In this work, we assume that each published document can link to its own
context where they indicate the documents they consider relevant using an \ssl
subweb specification.  For illustration, we consider the predicate
$\bnftd{ex:hasSpecification}$ that is attached to the current document.
An $\bnftd{ex:Specification}$ is a~resource that contains at least a~value for
$\bnftd{ex:scope}$, pointing to one or more \ssl strings.  This resource can
also contain metadata about the subweb specification.

\paragraph{Application to the Use Case}
\cref{lst:FollowFriendsScl} shows a part of Uma's profile where she exposes
an \ssl subweb specification to indicate that her friends can express
information about themselves.  This specification states that all
\url{foaf:knows} links from Uma should be followed, and that from those
followed documents, only information about that friend should be included.
% as
% long as \emph{she did not provide a contradicting information herself}.
By $\bnfts{WITH SUBWEBS}$, she indicates that her friends' subwebs must be
included in her subweb.  Then, Ann can express in her subweb specification
(\cref{lst:FollowFriendsAnnScl}) that she trusts documents pointed to by
\url{foaf:isPrimaryTopicOf} links about triples over the topic she
indicates.  With these subweb specifications, \cref{qry:Friends} produces only
Rows~1--3 of~\cref{tbl:Results}.
However,
we still include the undesired\footnote{Assuming Uma choosing a picture for Bob ``overrides'' the picture Bob has chosen for himself.} profile picture from Bob in our results (Row~3).
Extending the notion of filter to also allow this is left for future work.
% Our notion of filter currently does not allow
% filtering out Row 3 yet, i.e., on

\begin{figure}[tb]
%
% FIRST COLUMN
%
\begin{minipage}[t]{.49\linewidth}
% \begin{ssl}{test}{also}
% \lstset{basicstyle=\tiny}
\renewcommand\codefont{\ttfamily\fontsize{7}{8}\selectfont}
\begin{lstlisting}[language=SPARQL,caption=Subweb Specification of https://uma.ex/,label=lst:FollowFriendsScl]
<https://uma.ex/#me> ex:hasSpecification <#spec1>.
<#spec1> ex:appliesTo <https://uma.ex/>;
         ex:scope """
      	  FOLLOW ?friend WITH SUBWEBS {
            <https://uma.ex/#me> foaf:knows ?friend.
          } INCLUDE { ?friend ?p ?o. }
         """^^ex:SWSL.
\end{lstlisting}
% \end{ssl}
\end{minipage}
\begin{minipage}[t]{.49\linewidth}
\renewcommand\codefont{\ttfamily\fontsize{7}{8}\selectfont}
\begin{lstlisting}[
  label=lst:FollowFriendsAnnScl,
  caption= Subweb Specification of https://ann.ex/,
  language=SPARQL,
]
<https://ann.ex/#me> ex:hasSpecification <#spec2>.
<#spec2> ex:appliesTo <https://ann.ex/>;
         ex:scope """
      	  FOLLOW ?page {
            ?topic foaf:isPrimaryTopicOf ?page.
          } INCLUDE { ?topic ?p ?o. }
         """^^ex:SWSL.
\end{lstlisting}
\end{minipage}
% \vspace{-32pt}
\end{figure}

% % \begin{minipage}[b]{0.4\textwidth}
% \begin{lstlisting}[
%   label=lst:FollowFriendsScl1,
%   caption={
%     Additional contents of https://uma.ex/, representing a single \ssl subweb specification that indicates trustworthiness of her friends.
%   },
%   float=h,
%   language=SPARQL
%  ]
% <https://uma.ex/#me> ex:hasSpecification <https://uma.ex/#profile-friends>.
% <https://uma.ex/#profile-friends>
%     ex:appliesTo <https://uma.ex/>;
%     ex:scope """
%       FOLLOW ?friend WITH SUBWEBS {
%         <https://uma.ex/#me> foaf:knows ?friend.
%       } INCLUDE WHERE { ?friend ?p ?o. }
%     """^^ex:SSL.
% \end{lstlisting}
% \end{minipage}
% \begin{lstlisting}[
%   label=lst:FollowFriendsScl2,
%   caption={
%     Alternative formalization of \cref{lst:FollowFriendsScl1}
%   },
%   float=h,
%   language=SPARQL,
% ]
% <https://uma.ex/#me> ex:hasSpecification <https://uma.ex/#profile-friends>.
% <https://uma.ex/#profile-friends>
%     ex:appliesTo <https://uma.ex/>;
%     ex:scope """
%       FOLLOW ?friend WITH SUBWEBS {
%         <https://uma.ex/#me> foaf:knows ?friend.
%       } INCLUDE { ?friend ?p ?o1. } WHERE {
%         FILTER NOT EXISTS {
%           SERVICE <https://uma.ex/>
%           {
%             ?friend ?p ?o2.
%             FILTER (?o1 != ?o2)
%           }
%         }
%       }
%     """^^ex:SSL.
% \end{lstlisting}
% \begin{lstlisting}[
%   label=lst:FollowFriendsAnnScl1,
%   caption={
%     Additional contents of https://ann.ex/, representing her single \ssl subweb specification.
%   },
%   float=h,
%   language=SPARQL,
% ]
% <https://ann.ex/#me> ex:hasSpecification <https://ann.ex/#profile-subweb>.
% <https://ann.ex/#profile-subweb>
%     ex:appliesTo <https://ann.ex/>;
%     ex:scope """
%       FOLLOW ?page {
%         ?topic foaf:isPrimaryTopicOf ?page.
%       } INCLUDE WHERE {
%         ?topic ?p ?o.
%       }
%     """^^ex:SSL.
% \end{lstlisting}

\section{Power and Limitations of Existing \ltqp Approaches}
\label{sec:ldqlComparison}\label{sec:ldql}
% In \cref{sec:syntax}, we will make use \emph{link path expressions}, known from \ldql \cite{ldql}, to express the source selection functions.
Since \ldql is a powerful link traversal formalism that has been shown to subsume other approaches such as reachability-based querying \cite{reachability_semantics},
this raises the question: to what extent can \ldql in itself achieve the requirements set out in \cref{sec:requirements}?
In the current section we formally investigate this, after introducing some preliminaries on \ldql.

\newcommand\lpel{\m{\mathit{lpe}}}
\newcommand\lpeEx\lpel
\newcommand\linkpattern{\m{\mathit{lp}}}
\newcommand\lpetuple[1]{\tuple{#1}}
\newcommand\lpeEvalCtx[2]{\lpeEval{#1}{\enc(\ctxwold)}{#2}}

\subsection{Preliminaries: \ldql}
\ldql is a querying language for linked data. Its most powerful aspect is the navigational language it uses for identifying a subweb of the given \WOLD.
% To better understand the semantics of \ldql navigational language, we introduce its underlying data model that engender such semantics.
% In \ldql, a \emph{link graph} is associated to a \wold, which consists of explicit links between documents of the \wold.
% The \emph{link graph} of \wold is the directed graph whose set of vertices is the set \docs of the \wold, and the set of labeled edges $E$ characterizes all links between documents, formally \[ E=\brak{\tuple{d_{src}, (t,u), adoc(u)}\subseteq D\times (\alltriples \times \uris)\times D \mid t\in data(d_{src}) \ \text{and}\ u\in uris(t) }.\] Moreover, each link graph edge $\tuple{d_{src}, (t,u), d_{tgt}}$ is labeled by $(t,u)$.
%
The most basic block that constitutes \ldql's navigational language is a \emph{link pattern}: a tuple in
\[(\uris \cup \brak{\_,+})\times(\uris \cup \brak{\_,+})\times(\uris \cup \literals \cup \brak{\_,+}).\]
Intuitively, a link pattern requires a context uri $\uctx$, then evaluates to a set of \URIs (the links to follow) by matching the link pattern against the triples in the document that $\uctx$ is authoritative for.
%Intuitively, in the context of a given \URI $u$, a link pattern evaluates to a set of \URIs (the links to follow).
%To determine those links to follow, the patterns matched against triples in $\adoc(u)$, where the $+$ stands for the ``context'' \URI (i.e., $u$), and matching instantiations of the $\_$ are the \URIs to follow.
Formally, we say that a link pattern $\linkpattern=\lpetuple{\ell_1,\ell_2,\ell_3}$
% \heba{sometimes link patterns are mentioned as tuples and others are mentioned as triples, which one is better?
% } I TRIED TO DO CONSISTENTLY AS TUPLES, WITH THE COMMAND LPETUPLE so that we can change easily -BB \heba{Thanks}
\emph{matches} a triple $\triple{x_1}{x_2}{x_3}$
with result $u$ in the context of a \URI $\uctx$  if the following two points hold:
\begin{enumerate}
  \item
 there exists $i\in \brak{1,2,3}$ such that $\ell_i=\_$ and $x_i=u$, and
 \item  for every $i\in \brak{1,2,3}$ either $\ell_i=x_i$, or $\ell_i=+$ and $x_i=\uctx$, or $\ell_i=\_$.
 \end{enumerate}
% We write $\lpeEval{l}{\wold}{\uctx}$ to denote the set of $u$ for which $\linkpattern$ matches some link from $\

Link patterns are used to build \emph{link path expressions} (\lpes) %and LDQL queries ($q$)
with the following syntax:
\begin{align*}\lpel& := \varepsilon \mid \linkpattern \mid \lpel/\lpel \mid \lpel\lpeOr \lpel \mid \lpel^* \mid [\lpel] % \mid \tuple{?v,q}\\
%  q &:=\tuple{\lpel,P} \mid (\ldqlSeed\ U\ q) \mid (\ldqlSeed\ ?v\ q)\ \mid   (q \ldqlAND q) \mid (q\ldqlOr q) \mid \pi_{V}q,
\end{align*}
where $\linkpattern$ is a link pattern%, $P$ is a \sparql query, $U$ is a set of \URIs, $?v$ is a variable, and $V$ is a set of variables%
.
In a given \WOLD \wold, the value of a link path expression $\lpel$ in context \URI \uri (denoted $\lpeEval{\lpel}{\wold}{\uri}$) is a set of \URIs%, and the value of
% an \ldql query $q$ in a set of \URIs $S$,
% denoted $\ldqlEval{q}{\wold}{S}$, is a set of variable bindings,  as given in Tables
% \ref{table:sem2} and \ref{table:sem1} respectively.
as given in \cref{table:sem2}.

\begin{table}[t]
\centering
 \caption{Value of link path expressions} \label{table:sem2}
\begin{tabular}{l|l}
\toprule
    $\lpeEx$&$\lpeEval{\lpeEx}{\wold}{u}$\\
 \midrule
    $\epsilon$&$\{u\}$\\
 $\linkpattern$ & $\{u'\mid \linkpattern \text{ matches with $t$ $($with result $u'$ in context $u$ for some $t\in\data(\adoc(u))$}\}$\\
 $\lpel_1/\lpel_2$& \brak{v \mid v\in \lpeEval{\lpel_2}{\wold}{u'} \text{ and } u'\in \lpeEval{\lpel_1}{\wold}{u}}\\
 $\lpel_1\lpeOr \lpel_2$&$\lpeEval{\lpel_1}{\wold}{u} \cup \lpeEval{\lpel_2}{\wold}{u}$\\
 $\lpel^*$&$\{u\} \cup \lpeEval{\lpel}{\wold}{u} \cup \lpeEval{\lpel/\lpel}{\wold}{u} \cup \lpeEval{\lpel/\lpel/\lpel}{\wold}{u} \cup ...$\\
 $[\lpel]$ &$\brak{u \mid \lpeEval{\lpel}{\wold}{u}\ne \emptyset}$\\
%  $\tuple{?v, q}$ & \brak{u\in \uris \mid \text{there exists } \mu \in
%  \ldqlEval{q}{\wold}{\brak{u}} \text{ such that } \mu(?v)=u }\\
 \bottomrule
\end{tabular}
% \vspace{-5pt}
\end{table}

An \ldql query is a tuple $q = \tuple{\lpel,\sparqlex}$ with $\lpel$ a link path expression and $P$ a \sparql query.
The value of such a query $q$ in a \WOLD \wold with a set of seed \URIs $S$ is
\[\ldqlEval{q}{\wold}{S} := \queryEval{P}{}{W'} \text{ where } W'= \bigcup_{s\in S, u\in  \lpeEval{\lpel}{\wold}{s}} \simpl(\adoc(u),W),
\]
i.e., the query $P$ is evaluated over the (RDF dataset constructed from the) data sources obtained by evaluating the link path expression starting in one of the seeds.
%\todo{check dataset defined. Notation! do not use $dataset$ (never use multi-letter math symbols unless in mathit, mathrm or something of the kind) Use command}
%In fact, I have this rule in my head (using only declared commands), but when writing, the formalism attracts my attention and I say I will handle the commands once the formalism is established. But after thinking, I may forget about which ones to fix once done with the formalism.

\begin{remark} \citet{ldql} allow one other form of link path expression, where an entire \ldql query is nested in an \lpe; for the purpose
 of this paper, we opt to use a strict separation between query and source selection and omit this last option.\footnote{Notably, this option was also not present in the original work  \cite{ldql-conf}.}
 Additionally, they consider (Boolean) combinations of queries, thereby allowing to use different \lpes for different parts of the expression; we briefly come back to this when discussing scope restriction.
\end{remark}

% \subsection{\

\subsection{\ldql and the Requirements }

\paragraph{A Declarative Language for Selecting Data Sources}
In \ldql, the link path expressions provide a rich and flexible declarative language for describing source selection.
Here, paths through the linked web are described using a syntax similar to regular expressions. % to determine which links to traverse, and eventually, which data sources to be used for querying.
% The end points of the described paths then constitute the sources to be selected.
For instance, the \ldql expression
\[
% $
\lpetuple{+,\text{\lstinlineProxy{foaf:knows}},\_}/\lpetuple{+,\text{\lstinlineProxy{foaf:knows}},\_}
\]
% $
when evaluated in a given \URI \uri (the context) traverses to \uri's friends $f$ (as explicated by triples of the form \triple{u}{\text{\lstinlineProxy{foaf:knows}}}{f} in $\adoc(\uri)$) and subsequently to their friends $f_2$ (as indicated by triples \triple{f}{\text{\lstinlineProxy{foaf:knows}}}{f_2} in $\adoc(f)$).
The final result contains only such friends $f_2$.
In other words, this example expression identifies the documents of friends of friends of a given person.
% denotes traversals starting in a given document to
% Overall, we conclude that \ldql indeed provides an expressive declarative language

\paragraph{Independence of Query and Subweb Specification}
The design philosophy behind \ldql does not start from an independence principle similar to the one proposed here.
That is, in its most general form, \ldql allows intertwining the source selection and the query. For instance, the  \ldql query
\[
% $
\langle \lpeEx_1, \sparqlex_1\rangle\ldqlAND{} \langle \lpeEx_2, \sparqlex_2\rangle
\]
% $
%
expresses the \sparql query $\sparqlex_1\ldqlAND \sparqlex_2$, and on top of that specifies that different parts of the query should be evaluated with respect to different sources: $\sparqlex_1$ should be evaluated in the documents identified by $\lpeEx_1$ and $\sparqlex_2$ in the documents identified by $\lpeEx_2$.
In this sense, \ldql thus violates our principle of independence.
However, independence can easily be achieved in \ldql by only considering \ldql queries of the form
% \[
$
 \langle \lpel,\sparqlex\rangle
% \]
$
with $\lpel$ a link path expression and $\sparqlex$ a \sparql query.

\paragraph{Scope Restriction of Sources}
The semantics of an \ldql query
$
 \langle \lpeEx,\sparqlex\rangle
 $
is obtained by first evaluating \lpeEx starting from a seed document $\seed$, resulting in a set of \URIs $\lpeEval{\lpeEx}{W}{\seed}$; the \sparql query \sparqlex is then evaluated over the union of the associated documents.
That is, to compute the result of $\langle \lpeEx,\sparqlex\rangle$, for each document $\adoc(\uri)$  with $ \uri \in \lpeEval{\lpeEx}{W}{\seed}$, its entire content is used.
As such, \ldql provides no mechanism for partial inclusion of documents. %, such as needed in our use case to ensure that from each friend's document only information about themselves is taken into account.
However, while \ldql cannot select \emph{parts of documents}, it \emph{can} be used, as discussed above, to apply source selection strategies only to \emph{parts of queries} and thereby to a certain extent achieve the desired behaviour.
For instance, the query
\[
  \langle \lpeEx_1, (?x, \text{\lstinlineProxy{foaf:knows}},?y) \rangle\ldqlAND{} \langle \lpeEx_2, (?y, \text{\lstinlineProxy{foaf:mbox}}, ?m) \rangle
  \]
 will only use triples with predicate \text{\lstinlineProxy{foaf:knows}} from documents produced by $\lpeEx_1$.
%
% For instance the \lpe\tuple{+, \text{\lstinlineProxy{foaf:knows}}, \_} / \tuple{+, \text{\lstinlineProxy{foaf:mbox}}, \_} can be used to select all e-mail addresses of friends, but only taking into account contact information from the authoratative document for each \URI.
However, this sacrifices the independence property, and for complex queries and filters, this is not easy to achieve. % and that constructing the actual query can be a tedious process.

\newcommand\classS{\m{\mathcal{S}}}
\newcommand\enc{\m{\mathit{enc}}}
\newcommand\lpemeta{\m{e_{\mathit{meta}}}}
\newcommand\id{\m{\mathit{id}}}

\paragraph{Distributed Subweb Specifications}

This now brings us to the main topic of this section:
% the study of \ldql's ability to emulate distribution of source selection.
studying to which extent it is possible in \ldql to distribute the knowledge of how to construct the subweb of interest and as such to \emph{guide} the data consumer towards interesting/relevant documents.
% This is studied in detail in the next subsection.
% \subsection{\ldql and the Distributed Descriptions Property }
%
% \todo{Bart will rewrite this section based on the proofs we have already.}
%
To answer this question, we will consider a slightly simplified setting, without filters (all filters equal the identity function \id on their first argument) and where the Boolean $b$ in $(\sselect,b,f)$ is always true. I.e., each agent states that they wish to include the complete subweb of interest of all \URIs  identified by $\sselect$.
In this setting, we wonder if data publishers can, instead of publishing their subweb specification \emph{in addition to} their regular data, encode their subweb specification as triples \emph{in} the document (as meta-information), and use \emph{a single} ``meta'' link path expression that interprets these triples for the traversal. This is formalized as follows.

% \bas{When clear what we need for the proofs, we can introduce some classes of source selectors, then use these in the different definitions and propositions. E.g., the class of constant source selectors, etc.}

\begin{definition}
    Let $\classS$ be a set of source selectors, $\enc:\classS\to 2^\alltriples$ a function mapping source selectors $\sselect$ onto a set of triples $\enc(\sselect)$, and $\ctxwold=\tuple{\wold,\cwtuple}$ a \caWOLD (with $\wold =\tuple{\docs,\data,\adoc}$) in which each subweb specification is of the form $(\sselect,\ltrue,\id)$ with $\sselect\in \classS$.
    The \emph{encoding of \ctxwold} by $\enc$ is the \WOLD $\enc(\ctxwold)=\tuple{\docs, \data', \adoc}$ with for each $d\in\docs$:
  \[\data'(d) = \data(d) \cup  \bigcup_{\{\sselect\mid (\sselect,\ltrue,\id)\in \cw_\doc\}}\enc(\sselect).\]
%   Slightly abusing notation, we write $\enc(\ctxwold)$ for this \WOLD.
% Let $\classS$ be a set of functions $\allwolds\to2^\uris$, \enc a function $\classS\to 2^\alltriples$, and \ctxwold a \caWOLD in which each construction is of the form $(s,\ltrue,\id)$ with $s\in \classS$.
% The \emph{encoding of \ctxwold} by $\enc$ is the \WOLD $\wold'$ with the same documents and $\adoc$ as \wold but with
%  \[\data'(d) = \data(d) \cup  \bigcup_{\{s\mid (s,\ltrue,\id)\in \cw_\doc\}}\enc(s).\]
%  Slightly abusing notation, we write $\enc(\ctxwold)$ for this \WOLD.
\end{definition}
%\basm{Maybe we can give the encoding function a name. (no inspiration atm)}\bartm{``encoding'' captures the meaning well enough?}
\newcommand\docsof[1]{\m{\mathit{docs}(#1)}}
\begin{definition}
    Let $\classS$ be a set of source selectors, \enc a function $\classS\to 2^\alltriples$,  and $\lpemeta$  an \lpe.
We say that  $(\enc,\lpemeta)$ \emph{captures} $\classS$ if for each \caWOLD $\ctxwold=\tuple{\wold,\cwtuple}$ with $\wold =\tuple{\docs,\data,\adoc}$ and in which subweb specifications only use triples of the form  $(\sselect,\ltrue,\id)$ with $\sselect\in\classS$ and for each \URI $u$,
%  \[\lpeEval{\lpemeta}{\enc(\ctxwold)}{s} = \soi(\adoc(s),\ctxwold). \] OLD -BB
 \[\docsof{\lpeEval{\lpemeta}{\enc(\ctxwold)}{u}} = \soi(\adoc(u),\ctxwold),\]
 where
 \[ \docsof{S} = \bigcup_{s\in  S} \simpl(\adoc(s),\wold).\]
We will say that \emph{\ldql can capture distribution of functions in \classS} if there exist some \enc and $\lpemeta$ that capture \classS.
    %\basm{Can we make this independent of LDQL?}\bartm{Hmm, in principle we could, but this might make things slightly more complicated. Then we would need to introduce in general ``a language that can be evaluated in a \URI in a given \wold.... So... I'm not so sure if that is the right way to go.}
% Let $\classS$ be a set of functions $\allwolds\to2^\uris$, \enc a function $\classS\to 2^\alltriples$,  and $\lpemeta$  an \lpe.
% We say that  $(\enc,\lpemeta)$ \emph{captures} $\classS$ if for each \caWOLD that only uses constructions  $(s,\ltrue,\id)$ with $s\in\classS$ and for each \URI \uri
%  \[\lpeEval{\lpemeta}{\enc(\ctxwold)}{u} = \soi(\adoc(u),\ctxwold). \]
%
%  We will say that \emph{\ldql can capture distribution of functions in \classS} if there exists some \enc and $\lpemeta$ that capture \classS.
\end{definition}
What this definition states is that the \lpe \lpemeta, when evaluated in $s$ identifies precisely all \URIs needed to create the subweb of interest of $s$ (including $s$ itself).  In case \ldql captures distribution of functions in a certain class \classS, this means that the knowledge of selectors in \classS can be encoded as ``meta-triples'' in the documents to guide the querying agent towards the relevant data sources. In what follows, we study for some concrete classes whether \ldql can capture distribution.
% \heba{I guess this paragraph is not complete, yet. And I can not find it in previous versions.}\bart{Finished the sentence}

\newcommand\Scons{\m{\classS_{\mathit{const}}}}
\newcommand\Sp{\m{\classS_{p^*}}}
\newcommand\Sall{\m{\classS_{*}}}

To define the encodings, we will make use of some ``fresh'' \URIs we assume not to occur in any \WOLD. In our theorems, we will make use of some specific sets of source selectors. A source selector $\sselect$ is \emph{constant} if it maps all \WOLDs onto the same set of \URIs, i.e., if $\sigma(W)=\sigma(\wold')$ for all \WOLDs $\wold,\wold'$; the set of all constant source selectors is defined as $\Scons$.
If $p$ and $u$ are \URIs, we define the source selector $\mathit{all}_{p^*,u}$ as follows:
\[\mathit{all}_{p^*,u}: \wold \mapsto \lpeEval{\lpetuple{+,p,\_}^*}{\wold}{\uri}.\]
Intuitively, the function $\mathit{all}_{p ^*,u}$ identifies the set of all $p$s of $p$s of .... of $u$. For instance, by taking $p = \mathit{friend}$, we include all direct or indirect friends of $u$.
For a fixed $p$, we write $\Sp$ for the set of source selectors $\mathit{all}_{p^*,u}$. We write $\Sall$ for the set of all source selectors of the form $\mathit{all}_{p^*,u}$ for any $p$.
The set \Sall allows each data publisher to choose her own strategy for constructing the subweb, e.g., one \datapublisher might include all her $\mathit{friend}^*$s, another her $\mathit{colleague}^*$s and a third one only \URIs explicitly trusted (i.e., their $\mathit{trust}^*$s). %Our main expressiveness results are then summarized by Theorem \ref{thm:expressiv}.

Our main (in)expressivity  results are then summarized in the following Theorems. %as follows: %\new{duplicated sentence?}

\begin{theorem}\label{thm:expressivity1}
\ldql captures the distribution of $\Scons$.
\end{theorem}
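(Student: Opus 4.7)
The plan is to encode each constant source selector by triples that make its fixed target URIs findable in the document carrying that selector, and to use a Kleene-starred link path expression to transitively traverse these encoded hops.

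Concretely, fix a ``fresh'' URI $p\in\uris$ assumed not to occur in any \WOLD under consideration. For a constant source selector $\sselect\in\Scons$, let $U_\sselect$ denote its (unique) value. Define
\[
  \enc(\sselect) \;=\; \{\, (v,p,v) \mid v\in U_\sselect \,\}
  \qquad\text{and}\qquad
  \lpemeta \;=\; \lpetuple{\_,p,\_}^{*}.
\]
The key single-step observation I would first establish is: for any \caWOLD $\ctxwold=\tuple{\wold,\cwtuple}$ with $\wold=\tuple{\docs,\data,\adoc}$ and any URI $u_1$, evaluating the link pattern $\lpetuple{\_,p,\_}$ in context $u_1$ inside $\enc(\ctxwold)$ yields exactly $\bigcup_{(\sselect,\ltrue,\id)\in\cw_{\adoc(u_1)}} U_\sselect$. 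This is because, by freshness of $p$, the only triples in $\data_{\enc(\ctxwold)}(\adoc(u_1))$ with predicate $p$ are those contributed by $\enc(\sselect)$ for selectors $\sselect\in\cw_{\adoc(u_1)}$, and each such triple $(v,p,v)$ is matched by $\lpetuple{\_,p,\_}$ with result $v$ (both $\_$-slots consistently selecting $v$).

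Using this observation and the semantics of $(\cdot)^{*}$, I would then show that $\lpeEval{\lpemeta}{\enc(\ctxwold)}{u}$ equals the reflexive--transitive closure (from $u$) of the relation $u_1\!\to\!u_2$ defined by $u_2\in U_\sselect$ for some $(\sselect,\ltrue,\id)\in\cw_{\adoc(u_1)}$. Symmetrically, unfolding $\constrEval{\Theta}{\ctxwold}$ under the assumptions $b=\ltrue$ and $f=\id$ shows that $\soi(\adoc(u),\ctxwold)$ equals $\bigcup_{u'} \simpl(\adoc(u'),\wold)$, where $u'$ ranges over exactly that same reflexive--transitive closure from $u$ (the base case $u'=u$ being contributed by the $\epsilon$-branch of $\lpel^{*}$ on one side, and by the $\simpl(\adoc(u),\wold)$ summand of $\soi$ on the other). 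This quantity is exactly $\docsof{\lpeEval{\lpemeta}{\enc(\ctxwold)}{u}}$, which yields the required equality.

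The main obstacle is making the two infinite/recursive unfoldings coincide rigorously: $\lpel^{*}$ is defined as an infinite union of finite compositions, whereas $\constrEval{\cdot}{\cdot}$ is implicitly a least fixed point. I would handle this by a parallel induction on the number of hops $k$, showing that after $k$ hops both sides yield the same set of URIs (respectively, the same set of documents), and taking the union over $k$. A minor technical check to keep in mind is the treatment of URIs $v$ for which $\adoc(v)$ is undefined: both sides contribute the empty graph in that case, so they still agree.
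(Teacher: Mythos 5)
Your proposal is correct and follows essentially the same route as the paper's proof: encode each constant selector's fixed set of target \URIs as triples over a fresh \URI and use a single starred link pattern as \lpemeta, with the star accounting for the recursive inclusion of the targets' own subwebs of interest (the paper uses triples $\triple{a}{a}{u}$ and $\lpetuple{a,a,\_}^*$ rather than your $\triple{v}{p}{v}$ and $\lpetuple{\_,p,\_}^*$, an immaterial difference). Your explicit hop-by-hop induction aligning the unfolding of $\lpel^*$ with the recursive definition of the specified subweb is a more detailed justification of what the paper states in one line, but not a different argument.
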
%
\begin{proof}
We will provide explicit pairs of encoding and meta-expression that capture the distribution.

Consider the class $\Scons$; for any (constant) source selector $\sselect$ in this class, let $U$ be the set of sources defined by it. In this case, we take $\enc(\sselect)= \{\triple{a}{a}{u} \mid u \in U\}$ and $\lpemeta = \lpetuple{a,a,\_}^*$ with $a$ a fresh \URI. The link pattern $\lpetuple{a,a,\_}$ in \lpemeta is used to navigate to the \uri, while the star ensures that for each \uri that is found, also their subwebs of interests are included.
\end{proof}
\begin{theorem}\label{thm:expressivity2}
\ldql captures the distribution of $\Sp$.
\end{theorem}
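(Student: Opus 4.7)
The plan is to exhibit an explicit pair $(\enc, \lpemeta)$ that captures $\Sp$. Pick a fresh \URI $a$ that is guaranteed not to occur in any \WOLD under consideration (in particular with $a \neq p$), and for every selector $\mathit{all}_{p^*, u} \in \Sp$ set $\enc(\mathit{all}_{p^*, u}) = \{\triple{a}{a}{u}\}$. This stores the ``root'' \URI $u$ of the recursive selector as a single meta-triple inside the publisher's document. For the meta-expression take
\[ \lpemeta \;=\; \bigl(\lpetuple{a,a,\_} \,/\, \lpetuple{+, p, \_}^*\bigr)^*. \]
Intuitively, the inner $\lpetuple{a,a,\_}$ reads the stored root $u_0$ out of the current document, $\lpetuple{+, p, \_}^*$ then walks the \URIs in $\mathit{all}_{p^*, u_0}(\wold)$, and the outer Kleene star handles the recursive inclusion of subwebs that $b = \ltrue$ demands.

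The next step is two freshness-based sanity observations. Because $a \neq p$, no meta-triple $\triple{a}{a}{u_0}$ is ever matched by $\lpetuple{+, p, \_}$, and because $a$ does not occur in the original \wold, no original triple is matched by $\lpetuple{a,a,\_}$. Consequently, for every \URI $u_0$ we have $\lpeEval{\lpetuple{+, p, \_}^*}{\enc(\ctxwold)}{u_0} = \mathit{all}_{p^*, u_0}(\wold)$, and for every \URI $v$ with $\adoc(v)$ defined, $\lpeEval{\lpetuple{a,a,\_}}{\enc(\ctxwold)}{v}$ is exactly the set of $u_0$ with $(\mathit{all}_{p^*, u_0}, \ltrue, \id) \in \cw_{\adoc(v)}$.

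The core step is to show that both sides of the required equality
\[ \docsof{\lpeEval{\lpemeta}{\enc(\ctxwold)}{u}} \;=\; \soi(\adoc(u), \ctxwold) \]
satisfy the same monotone set-valued recursion. Writing $F(v) := \lpeEval{\lpetuple{a,a,\_}/\lpetuple{+,p,\_}^*}{\enc(\ctxwold)}{v}$ and unrolling one step of the outer Kleene star gives
\[ \docsof{\lpeEval{\lpemeta}{\enc(\ctxwold)}{u}} \;=\; \simpl(\adoc(u), \wold) \,\cup\, \bigcup_{v \in F(u)} \docsof{\lpeEval{\lpemeta}{\enc(\ctxwold)}{v}} . \]
On the other side, unfolding $\soi$ and $\constrEval{\cw_{\adoc(u)}}{\ctxwold}$ in the case $b = \ltrue$, $f = \id$ and applying the sanity observations to identify the index set with $F(u)$ produces the identical recursion
\[ \soi(\adoc(u), \ctxwold) \;=\; \simpl(\adoc(u), \wold) \,\cup\, \bigcup_{v \in F(u)} \soi(\adoc(v), \ctxwold). \]
Since both objects are the least fixed point of the same monotone operator over the finite set of \URIs appearing in \wold, they coincide.

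The main technical obstacle I anticipate is aligning the iterative Kleene-star unrolling in the LDQL semantics with the recursive unfolding of $\soi$ step by step. This can be handled cleanly by induction on the Kleene iteration stage (equivalently, by noting that both sides are least fixed points of the same monotone operator), while partiality of $\adoc$ causes no trouble because undefined terms contribute the empty set on both sides.
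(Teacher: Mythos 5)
Your construction is exactly the paper's: the same encoding $\enc(\mathit{all}_{p^*,u})=\{\triple{a}{a}{u}\}$ with a fresh \URI $a$ and the same meta-expression $(\lpetuple{a,a,\_}/\lpetuple{+,p,\_}^*)^*$, so this is essentially the same proof. The paper stops at the intuitive justification, whereas you additionally spell out the correctness via freshness observations and a least-fixed-point argument, which is a fine (and slightly more rigorous) elaboration of the same idea.
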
%
\begin{proof}
We will provide explicit pairs of encoding and meta-expression that capture the distribution, similar to the proof of~\cref{thm:expressivity1}, but now for the class \Sp.

We can take $\enc(\mathit{all}_{p^*,u})= \{\triple{a}{a}{u}\}$ and $\lpemeta = (\lpetuple{a,a,\_}/\lpetuple{+,p,\_}^*)^*$ with $a$ a fresh \URI.
In this expression \lpemeta, the link pattern $\lpetuple{a,a,\_}$ is used to navigate to the \uri as in the case of $\Scons$. Each of these \URIs is a source whose $p$s of $p$s of... we wish to include; the part $\lpetuple{+,p,\_}^*$ then navigates to all such $p^*$s.
Again, the outermost star ensures that for each \uri that is found, also their subwebs of interests are included.
\end{proof}

Intuitively, \ldql captured the classes $\Scons$ and $\Sp$, because we
wanted the encoding to provide only one piece of information, which is
the source to navigate to.
Hence, the usage of the meta-expression is to decode that source.
In the case of the $\Sall$ distribution, there are two pieces of
information that need to be provided by the encoding function
	\begin{andlist}
		\item the source to navigate to
		\item the property we need to follow from that source
	\end{andlist}.
This can be encoded by some simple \enc function.
Hence, the meta-expression would then need to decode both
the source and the property. However, regardless of how
the \enc function will be defined, there is no generic \lpe
that does this correctly. This is proved in the following theorem.

\begin{figure}[t]
  \centering
  \begin{tikzpicture}[
nodeRec/.style={rectangle, fill=black!10, draw=blue!40!black!60, thick},
edge/.style = {->,> = latex'},
]
  \node[nodeRec] (1) [rectangle split, rectangle split, rectangle split parts=2,align=left] { {$d_1= \adoc(u_1)$} \nodepart{second} {$\data= \{\triple{u_1}{p}{u_2}, \triple{u_1}{q}{u_3}\}$}};
  \node (1d) [above = 0.4cm of 1] {``include $p^*$s of $u_1$''};
  \draw[dashed](1)--(1d);
  \node[nodeRec] (2) [above right= -0.4cm and 1.2cm of 1, rectangle split, rectangle split, rectangle split parts=2] { {$d_2= \adoc(u_2)$} \nodepart{second} {$\data=\emptyset$}};
  \node[nodeRec] (3) [below right= -0.4cm and 1.2cm of 1, rectangle split, rectangle split, rectangle split parts=2] { {$d_3= \adoc(u_3)$} \nodepart{second} {$\data=\emptyset$}};

  \path[->]
%   FROM     BEND/LOOP   POSITION OF LABEL   LABEL   TO
  (1) edge               node[above]        {$p$}    (2)
  (1) edge               node[above]        {$q$}    (3)
   ;

\end{tikzpicture}
\caption{Example \WOLD used in \ldql inexpressivity proof.}
% \vspace{-10pt}
\label{fig:ex}
\end{figure}
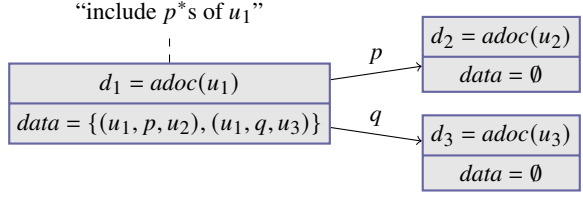
\begin{theorem}\label{thm:inexpressivity}
\ldql does not capture the distribution of $\Sall$.
\end{theorem}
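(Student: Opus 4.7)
The plan is to assume for contradiction that some pair $(\enc,\lpemeta)$ captures $\Sall$ and construct a \emph{single} counterexample \caWOLD, refining the \WOLD of \cref{fig:ex}, on which capture must fail. The workhorse will be an \emph{equivariance lemma}: for every permutation $\pi$ of $\uris\cup\alldocs$ that preserves the partition, fixes the (finite) set $C$ of \URIs occurring as constants in $\lpemeta$, and is compatible with $\adoc$, one has
\[
\pi\bigl(\lpeEval{\lpemeta}{W}{u}\bigr)=\lpeEval{\lpemeta}{\pi(W)}{\pi(u)}
\]
for every \WOLD $W$ and \URI $u$. I would prove this by structural induction on $\lpemeta$; the only nontrivial case is the base case for link patterns, and it is immediate because the three positions of a link pattern admit only constants from $C$ (fixed by $\pi$), the context marker $+$, or the wildcard $\_$, all of which commute with $\pi$.

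For the construction, pick $p,u_1\notin C$ and set $T=\enc(\mathit{all}_{p^*,u_1})$. Then choose $q,u_2,u_3\in\uris$ pairwise distinct and fresh with respect to $C\cup\{p,u_1\}\cup\urisof(T)$, and three distinct documents $d_1,d_2,d_3\in\alldocs$. Let $\pi$ be the involution swapping $p\leftrightarrow q$, $u_2\leftrightarrow u_3$ and $d_2\leftrightarrow d_3$, fixing everything else; by the choice of freshness, $\pi$ fixes $C$ and $u_1$. Build $\ctxwold=\tuple{W,\cwtuple}$ with $\adoc(u_i)=d_i$ (undefined elsewhere), $\data(d_2)=\data(d_3)=\emptyset$, $\cw_{d_1}=\{(\mathit{all}_{p^*,u_1},\ltrue,\id)\}$, $\cw_{d_2}=\cw_{d_3}=\emptyset$, and the crucial assignment
\[
\data(d_1)=\{\triple{u_1}{p}{u_2},\triple{u_1}{q}{u_3}\}\cup\pi(T).
\]
The added ``mirror'' triples $\pi(T)$ coincide with $T$ after swapping $p$ and $q$; by freshness of $q,u_2,u_3$ they introduce no $p$-edge out of $u_1$, so $\mathit{all}_{p^*,u_1}(W)=\{u_1,u_2\}$ and $\soi(d_1,\ctxwold)=\simpl(d_1,W)\cup\simpl(d_2,W)$.

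Capture therefore forces $\lpeEval{\lpemeta}{\enc(\ctxwold)}{u_1}$ to contain $u_1$ and $u_2$ while excluding $u_3$. On the other hand, $\enc(\ctxwold)$ has $\data(d_1)=\{\triple{u_1}{p}{u_2},\triple{u_1}{q}{u_3}\}\cup (T\cup\pi(T))$: the two data triples are swapped by $\pi$, and $T\cup\pi(T)$ is $\pi$-stable because $\pi$ is an involution, so $\pi$ is an automorphism of the \WOLD $\enc(\ctxwold)$. The equivariance lemma then forces $\lpeEval{\lpemeta}{\enc(\ctxwold)}{u_1}$ to be $\pi$-invariant, so it contains $u_2$ iff it contains $u_3$ --- contradicting the previous requirement.

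The main obstacle, and the reason the naive symmetry argument applied directly to the bare \WOLD of \cref{fig:ex} fails, is the case in which $\enc$ is allowed to encode the property $p$ inside its meta-triples: then $T$ itself is not $\pi$-invariant and $\pi$ is not an automorphism of the encoded document. Inserting the mirror noise $\pi(T)$ into $\data(d_1)$ is the precise move that repairs this; freshness of $q$ guarantees that these extra triples contain no $p$-edge out of $u_1$, so the intended $\soi$ is untouched while the encoded document becomes $\pi$-symmetric.
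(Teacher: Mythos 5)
Your proof is correct, and it reaches the contradiction by a genuinely different technical route than the paper's. Both arguments start from essentially the same counterexample (the \WOLD of \cref{fig:ex}, with a $p$-edge to $u_2$ and a $q$-edge to $u_3$ that $\lpemeta$ cannot tell apart because it mentions none of $p,q,u_2,u_3$), but the paper establishes the indistinguishability of $u_2$ and $u_3$ by two bespoke structural inductions on \lpes evaluated in that particular \WOLD: one showing that empty, specification-free documents contribute at most their own \URI, and one showing directly that $u_2$ lands in $\lpeEval{\lpemeta}{\enc(\ctxwold)}{u_1}$ iff $u_3$ does, where the encoding triples are dealt with simply by observing that they never mention $u_2$ or $u_3$. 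You instead prove a single reusable equivariance lemma and then exhibit a nontrivial automorphism $\pi$ of the encoded \WOLD; the price you pay is that the document $d_1$ must be $\pi$-symmetric \emph{including} the encoding triples $T=\enc(\mathit{all}_{p^*,u_1})$, which may well mention $p$ --- hence your mirror-noise trick of seeding $\data(d_1)$ with $\pi(T)$, together with the freshness bookkeeping guaranteeing that $\pi(T)$ adds no $p$-edge out of $u_1$ and so leaves $\soi(d_1,\ctxwold)$ (and hence the required answer $\{u_1,u_2\}$) unchanged. This extra step is forced only by your method: the paper's hand-rolled induction never needs the encoded document to be globally symmetric, only that no link pattern of $\lpemeta$ can separate the two data triples or extract $u_2$ or $u_3$ from $T$. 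What your version buys is robustness and generality --- the equivariance lemma is a clean, independently useful statement, and the case analysis is concentrated in one standard induction rather than spread over several example-specific claims. One small point worth making explicit is that $T$ must be finite (otherwise $\enc(\ctxwold)$, and your $\data(d_1)\supseteq\pi(T)$, would not be legal \WOLDs since $\data$ must map each document to a finite set of triples); this also guarantees that the fresh \URIs $q,u_2,u_3$ exist.
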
%
\begin{proof}
For the sake of contradiction, assume that \ldql captures the
distribution of $\Sall$.  Then, there exists some pair of encoding
and meta-expression that captures \Sall.
Let this pair be $(\enc,\lpemeta)$ with $U$ the set of \URIs
mentioned in \lpemeta.
We can construct a \WOLD (see~\cref{fig:ex}) that uses \URIs
that do not occur in $U$ in which only one document has a non-empty
subweb specification.
As shown in~\cref{fig:ex}, we have two triples $\triple{u_1}{p}{u_2}$
and $\triple{u_1}{q}{u_3}$ in $d_1$ where none of $p, q, u_2$, or $u_3$
is in $U$.
Moreover, $d_1$ has a subweb specification whose source selector is
$\mathit{all}_{p^*,u_1}$.
We can also make sure that neither $u_2$ nor $u_3$ is mentioned in the
triples added by $\enc$.  This can be safely assumed since $\enc$ does
not depend on the triples found in the \wold, rather it only depends on
the source selector.

Now, if $\lpemeta$ is a correct meta-expression,
$\lpeEval{\lpemeta}{\enc(\ctxwold)}{u_1}$ should evaluate to
$\{u_1, u_2\}$.
However, any $\lpemeta$ when evaluated at $u_1$ in our \ctxwold
would either include $u_3$ as a selected source or exclude $u_2$
from the selected sources.
Thus, in the rest of the proof we verify this claim.

In order to verify this claim, we first need to show that given
any \URI $u$ whose authoritative document and subweb specification
are empty in some \ctxwold, then for every \lpe $e$, we have that
\[\lpeEval{e}{\enc(\ctxwold)}{u} \subseteq \{u\}\]
Intuitively, if we have an empty document without any associated
subweb specification, then any \lpe evaluated at the source of
this document would not result in any sources except for the source
of that document, which the context \URI.
This can be shown by induction on the shape of the \lpe $e$ as follows:
\begin{itemize}
\item If $e$ is $\epsilon$ or $e$ is $[\lpel]$, it is clear that
$\lpeEvalCtx{e}{\uri} \subseteq \{\uri\}$.
\item If $e$ is $\linkpattern$, we have $\lpeEvalCtx{e}{\uri} = \emptyset$
since the document is empty and $\uri$ has no subweb specification.
\item If $e$ is $\lpel_1/\lpel_2$, it follows by induction that
$\lpeEvalCtx{\lpel_1}{\uri} \subseteq \{\uri\}$.
In case $\lpeEvalCtx{\lpel_1}{\uri}$ is empty, then
$\lpeEvalCtx{e}{\uri} = \emptyset \subseteq \{\uri\}$.
Otherwise, $\lpeEvalCtx{\lpel_1}{\uri} = \{\uri\}$.
In that case, it also follows by induction that
$\lpeEvalCtx{\lpel_2}{\uri} = \lpeEvalCtx{e}{\uri} \subseteq \{\uri\}$.
% \item If $e$ is $\lpel_1 \lpeOr \lpel_2$ or $e$ is $\lpel^*$, then
% $\lpeEvalCtx{e}{\uri} \subseteq \{\uri\}$ follows by induction.
\item If $e$ is $\lpel_1 \lpeOr \lpel_2$, it follows by induction that
both $\lpeEvalCtx{\lpel_1}{\uri}$ and $\lpeEvalCtx{\lpel_2}{\uri}$ are
subsets of $\{\uri\}$.
Hence, $\lpeEvalCtx{e}{\uri} = \lpeEvalCtx{\lpel_1}{\uri} \cup
\lpeEvalCtx{\lpel_2}{\uri} \subseteq \{\uri\}$.
\item If $e$ is $\lpel^*$, then $\lpeEvalCtx{e}{\uri} \subseteq \{\uri\}$
follows by induction.
\end{itemize}

Applying this to our example, we see that for any \lpe $e$,
it follows that $\lpeEvalCtx{e}{u_2} \subseteq \{u_2\}$ and
$\lpeEvalCtx{e}{u_3} \subseteq \{u_3\}$.
Now we need to show that for every \lpe $e$,
\[u_2\in \lpeEval{e}{\enc(\ctxwold)}{u_2} \text{ if and only if }
u_3 \in \lpeEval{e}{\enc(\ctxwold)}{u_3}\]

This can also be verified by induction on the shape of the \lpe $e$,
however, it is easily shown from the previous induction since in all the
cases where $\lpeEvalCtx{e}{\uri}$ did not turn out empty, the result
did not depend on the shape of the link patterns used in the expression
rather it originally came from the \lpe $\epsilon$ which is evaluated
similarly in $u_2$ and $u_3$.

Now, in our example \ctxwold, we show that for every \lpe \lpemeta that
does not mention any of the \URIs $p$, $q$, $u_2$, or $u_3$, we have that
\[u_2\in \lpeEvalCtx{\lpemeta}{u_1} \text{ if and only if }
u_3 \in \lpeEvalCtx{\lpemeta}{u_1}\]
We verify this claim by induction on the possible shapes of $\lpemeta$
as follows:
\begin{itemize}
\item If $\lpemeta$ is $\epsilon$ and $\lpemeta$ is $[\lpel]$, it is clear
that $\lpeEvalCtx{\lpemeta}{u_1} \subseteq \{u_1\}$.
\item If $\lpemeta$ is $\linkpattern$, we have three cases to analyze:
\begin{itemize}
    \item $\linkpattern$ matching a triple that is added by $\enc$.
    As mentioned, none of the triples added by
    $\enc(\mathit{all}_{p^*,u_1})$ to $d_1$ mentions $u_2$ or $u_3$.
    % \heba{This assumption simplifies the proof. But, is this true? It makes sense since otherwise the encoding would depend on the
    % triples of the documents in the WOLD.  I believe the proof would still work without this assumption but it would be a bit more complicated.}\bart{Yes: BEFORE choosing $W$ we alreayd know what $\enc(\mathit{all}_{p^*,u_1})$ is. We can just choose $u2$ and $u3$ not only to not occur in $U$ but also to not occur in $\enc(\mathit{all}_{p^*,u_1})$.  Should maybe be said explicity in the beginning when we ``choose'' $u2$ and $u3$.}
    % \heba{OK, updated. Thanks.}
    Hence, it is clear that $u_2 \not \in \lpeEvalCtx{\linkpattern}{u_1}$
    and $u_3 \not \in \lpeEvalCtx{\linkpattern}{u_1}$.
    \item $\linkpattern$ of the form $\lpetuple{l_1,l_2,l_3}$ matching the
    triple $\triple{u_1}{p}{u_2}$ in $d_1$.
    Since \lpemeta mentions neither $p$ nor $u_2$, we must have
    $l_1 \in \{u_1, +, \_ \}$ and $l_2 = l_3 = \_$ in order to match this
    triple.
    Clearly, each of the three possible link patterns matches the triple
    $\triple{u_1}{q}{u_3}$ as well.
    Thus, $\{u_2, u_3\} \subseteq \lpeEvalCtx{\linkpattern}{u_1}$.
    \item for any other $\linkpattern$, we have
    $\lpeEvalCtx{\linkpattern}{u_1} = \emptyset$ since there are no other
    triples in the document.
\end{itemize}
\item If $\lpemeta$ is $\lpel_1/\lpel_2$,
it follows by induction that $u_2 \in \lpeEvalCtx{\lpel_1}{u_1}$
if and only if $u_3 \in \lpeEvalCtx{\lpel_1}{u_1}$.
Accordingly, we have three cases to analyze:
\begin{itemize}
    \item neither $u_1$, $u_2$ nor $u_3$ belongs to
    $\lpeEval{\lpel_1}{\enc(\ctxwold)}{u_1}$.
    In this case, neither $u_2$ nor $u_3$ belongs to
    $\lpeEval{\lpemeta}{\enc(\ctxwold)}{u_1}$.
    \item $u_1 \in \lpeEval{\lpel_1}{\enc(\ctxwold)}{u_1}$.
    By induction, we have that $u_2 \in \lpeEvalCtx{\lpel_2}{u_1}$
    if and only if $u_3 \in \lpeEvalCtx{\lpel_2}{u_1}$.
    \item $\{u_2, u_3\} \subseteq \lpeEvalCtx{\lpel_1}{u_1}$.
    Thus, the expression $\lpel_2$ will be evaluated once at $u_2$
    and once at $u_3$.
    By our previous discussion, we can verify that
    $u_2 \in \lpeEval{\lpel_2}{\enc(\ctxwold)}{u_2}$ if and only if
    $u_3 \in \lpeEval{\lpel_2}{\enc(\ctxwold)}{u_3}$.
\end{itemize}
Thus, in all three cases, $u_2 \in \lpeEvalCtx{\lpemeta}{u_1}$
if and only if $u_3 \in \lpeEvalCtx{\lpemeta}{u_2}$.
\item If $\lpemeta$ is $\lpel_1 \lpeOr \lpel_2$ or $\lpemeta$ is $\lpel^*$,
it follows by induction.
\end{itemize}
% \bart{The easiest solution might be to do this with a case analysis on the shape of LPE explicitly for our web.}
% \heba{OK}
In all cases, we have showed that
$u_2 \in \lpeEval{\lpemeta}{\enc(\ctxwold)}{u_1}$ if and
only if $u_3 \in \lpeEval{\lpemeta}{\enc(\ctxwold)}{u_2}$.
Hence, a contradiction.
\end{proof}

\section{Experiments}
\label{sec:experiments}
In this section, we present an empirical evaluation of our proposed formalism.
The aim of these experiments is to evaluate what the possible gain is when using subweb specifications.
Since for the moment, there are no data publishers publishing their subweb specifications, we created a virtual linked web,  in which agents do publish such specifications, of which we assume they represent which data sources they trust.
On this virtual linked web, we will compare querying the annotated web in our new semantics to querying it under existing reachability semantics.
We expect to observe the benefits of using subweb specifications in terms of completeness of the query results, data quality, and performance.
As far as data quality and completeness go, we assume here that, since each agent publishes their subweb specifications, the results obtained by querying the subweb-annotated \WOLD are considered the ``gold standard'' (they are the results we wish to obtain since they take every agent's specifications into account).
What we want to evaluate is how well existing link traversal can approximate this gold standard: whether query results are missing (incomplete) or whether spurious results are obtained (indicating the use of lower-quality data).
We will also evaluate performance: the time needed to traverse the web, as well as the number of traversals required.
% \heba{Is it the total time needed to traverse the query? or only the traversal time as mentioned in the results? From what I understand, the traversal time is only the time to select the relevant sources and has nothing to do with the time to evaluate the query itself.)}
% \younes{I think we mean the traversal time in this position. We have both the traversal and query evaluation time, but I think the number of triples can be used instead of the query evaluation time.}
% WHY NOT JUST REPORT BOTH? -BB

Our experiments are run on a virtual linked web constructed from the LDBC SNB (Social Network Benchmark) dataset~\cite{ldbc_snb_interactive}.
The resulting dataset is then fragmented into interlinked datasources resulting in a virtual web of around 301000 datasources with 1.5GB of data.
The schema of the virtual linked web is illustrated in Figure \ref{fig:exp-schema}.
To each document type, we attach a subweb specification with the intended meaning that it identifies other data on the web it trusts.
We use simple specifications that can naturally arise in real-life scenarios.  Most of the defined specifications use filters to only include data that is related to the respective source and ignore other triples that provide information about others.
For instance, a person trusts (includes data from) other persons they know,
their university, the city where they live, and the company in which they work.
The experiments are then run in the annotated virtual linked web.
The subweb specifications used in the annotated linked web
are listed in \ref{sec:subwebs-annot}.
% \heba{As we shall see in the results, the used strong filters did not
% return any results in some of the queries.  To overcome this,
% we reran the experiments for \swsl using modified filters in the subweb specifications in those queries.}

We compare our semantics with the most common reachability semantics criteria \cite{reachability_semantics}:
\cAll, \cNone, and \cMatch.
Our expectation is that \cAll, which follows all links it finds, will be too slow for practical purposes, and generate too many spurious results.
For \cNone, which simply follows no links (and thus only queries the given seed documents), we expect extremely fast querying, but almost no query results, due to the fragmentation of the data.
For \cMatch, which follows all links that somehow ``match'' the query in question will miss certain query results, but can also generate some spurious results. Indeed, the traversal performed by \cMatch is only informed by the \queryingagent, not by the individual users on the web.
The \emph{performance} (in terms of runtime and number of links traversed) of \cMatch compared to subweb-annotated querying is hard to predict.
Clearly, the behavior of \swsl itself differs with respect to the
filters used in the specifications, which will be elaborated in two of
the experimented queries.

\begin{figure}
	\centering
	\includegraphics[width=\textwidth]{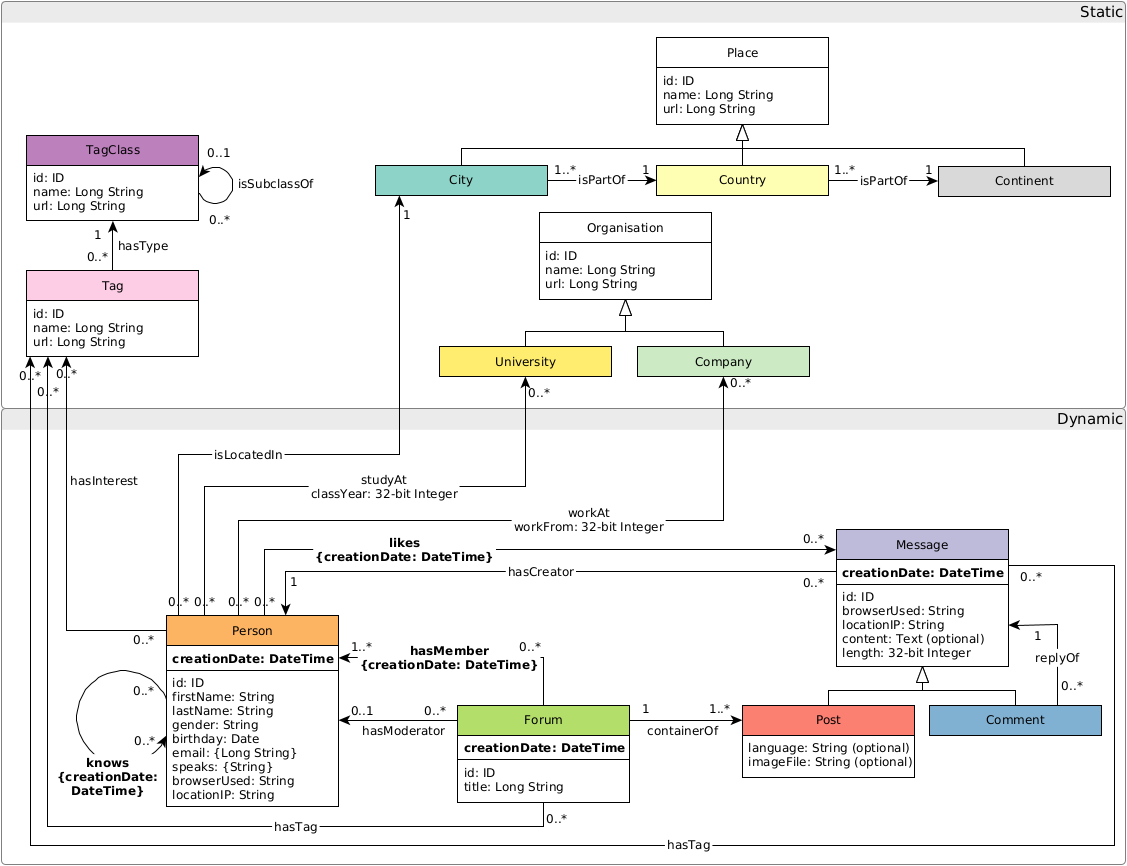}
	\caption{Schema of the virtual linked web used in the experiments (from \url{https://www.npmjs.com/package/ldbc-snb-decentralized}).
% 	The picture itself has good quality at 100\% but the scaling decreases it. }
%TODO ask RUBENT To produce a better one for camera ready version.
}
	\label{fig:exp-schema}
\end{figure}

The virtual web is hosted locally using a (community\footnote{https://communitysolidserver.github.io/CommunitySolidServer/docs/}) solid server.
Each document on the web provides information in the form of a dump file, that is, the \rdf dataset which should be downloaded entirely prior to querying.
We use the comunica query engine \cite{taelman_iswc_2018} as a \sparql querying engine; the engine uses a caching mechanism that avoids downloading the same file more than once.
% \old{The experiments are conducted on a personal computer with 16GB of memory and an Intel Core i7 vPro 8th Gen with 1.90GHz and 8 cores running Ubuntu 20.04.4.}
The experiments are conducted on a personal computer with 16GB of memory and an Intel Core i7 with 2.6 GHz and 6 cores running macOS 12.6.

\subsection{Queries}

For the evaluation, we used four different queries.
The first query (\cref{eval:q1}) searches for persons' information,
it retrieves for each person, the city where they are located along
with the country and the continent of this city.
The second query (\cref{eval:q2}) matches persons working in the
same company, in which those persons' are retrieved along with
the name of the company.
In the third query (\cref{eval:q3}), we test forums members' interests;
we want to list members (and moderators) of a forum
that have no interest (tag) in common with the forum.
The last query (\cref{eval:q4}) is used to retrieve all persons
having an interaction between them; an interaction here is simply
a person liking a comment of another; we also list the city where
the person who performed the like lives in.  For the first two queries,
we evaluated our semantics using two different subweb specifications.
We refer to the main subweb specification used in all queries as \ssl.
The other subweb specification used in evaluating the first query will
be referred to as {\swslOne}, while \swslTwo is the other subweb
specification used for the second query.  The main difference between
\swslOne and \swslTwo and \swsl is the filters of the subweb
specifications attached to person-type documents.  Simply, we allow
more triples to be included in the subwebs of person-type documents
in the case of \swslOne and \swslTwo than that of \swsl, so the
filters used in \swsl are more restricted
(see~\ref{sec:subwebs-annot} for details).

\begin{minipage}{0.48\textwidth}
\begin{test}{eval:q1}{Person's location}
\begin{lstlisting}[language=SPARQL,xleftmargin=0.15\textwidth]
SELECT ?person ?country ?cont
WHERE {
  ?person voc:isLocatedIn ?city.
  ?city voc:isPartOf ?country.
  ?country voc:isPartOf ?cont.
}
\end{lstlisting}
\end{test}

\hfill\hrule\hfill

\begin{test}{eval:q2}{Same company}
\begin{lstlisting}[language=SPARQL,xleftmargin=0.15\textwidth]
SELECT ?person1 ?person2 ?namecomp
WHERE {
  ?person1 voc:workAt ?c1.
  ?c1 voc:hasOrganisation ?comp1.

  ?person2 voc:workAt ?c2.
  ?c2 voc:hasOrganisation ?comp2.

  FILTER(?person1 != ?person2)

  ?comp1 foaf:name ?namecomp.
  ?comp2 foaf:name ?namecomp.
}
\end{lstlisting}
\end{test}
\end{minipage}\hfill\vline\hfill
\begin{minipage}{0.48\textwidth}
\begin{test}{eval:q3}{Forum member interests}
\begin{lstlisting}[language=SPARQL,xleftmargin=0.15\textwidth]
SELECT ?forum ?creator
WHERE {
  {?forum voc:hasModerator ?creator.}
  UNION
  {?forum voc:hasMember ?creatorB.
   ?creatorB voc:hasPerson ?creator.}
  ?creator voc:hasInterest ?interest.

  FILTER (NOT EXISTS {
    SELECT ?tagForum WHERE {
      ?forum voc:hasTag ?tagForum.
      FILTER (
        bound(?interest) &&
        ?tagForum = ?interest )}})
}
\end{lstlisting}
\end{test}

\hfill\hrule\hfill

\begin{test}{eval:q4}{Interaction}
\begin{lstlisting}[language=SPARQL,xleftmargin=0.15\textwidth]
SELECT ?person ?creator ?city
WHERE {
  ?person voc:isLocatedIn ?city.
  ?person voc:likes ?message.
  ?message voc:hasComment ?comm.
  ?comm voc:hasCreator ?creator.
}
\end{lstlisting}
\end{test}
\end{minipage}
\vspace*{0.5cm}

Each query is executed 12 times
with a different (random) seed \URI for each run,
and the average is reported.
We use the same set of seeds for all strategies
to avoid bias related to seeds selection,
for instance,
forums can have different number of members
and persons can have
different numbers of posts, comments, or friends.
% \old{
% An exception to this is Query 4 (Table \ref{table:experiments-res4}).
% We ran the test only once for \cAll, which turned out to be too hard for the query engine used.
% In the single run, which timed out after 9.16 hours of execution, the run already generated 49476 results.}
% after 9.16 hours of execution.
% % We are confident that running the same test 10 times will not
% reveal a contradicting result.  %ARE WE? -BB
% Hence, we kept that experiment
% as it shows an interesting use case of \cMatch semantics.

\subsection{Results}

The results are displayed in Tables \ref{table:experiments-res1}--\ref{table:experiments-res4} (one table for each query).
In all the experiments, \cAll and \cNone behave as predicted,
that is,
the engine traverses a large number of links when using \cAll,
while no links are traversed when \cNone is used.

% The first two queries show an interesting behavior of \ssl
% (see Tables \ref{table:experiments-res1} and \ref{table:experiments-res2}).
% The engine didn't yield any results using it.

The first query (see \cref{table:experiments-res1})
shows the best traversal performance for \cMatch
to the detriment of query results.
Using \swslOne, the engine was able to yield on average 16 results,
while \cMatch was only able to find one of these results.
% on the other hand, \cMatch yielded one result.
In fact, for each correct result retrieved by \swslOne,
a person entity is involved, and since no triple pattern in the query
links to other persons, \cMatch only has the seed document to generate
a result in response to the query.
On the other hand, \swslOne was able to traverse to other persons
using the subweb specifications of the seed document.
It may seem weird to have no results at all from \ssl,
but this makes sense since the defined filters are quite strong.
In order to be able to include the triple that links the country to
its continent, which is of the form
\verb|(_:country voc:isPartOf _:continent)|, in the subweb of
the current person, this can only be done
via the \URI of the city where this person is located in.
Although this triple is included in the subweb of the city,
the filter does not allow the necessary triple to be included
in the subweb of the person as it only allows for triples whose
subject is the city itself.

The second query shows similar behavior for \ssl (see Table \ref{table:experiments-res2}).
The engine didn't yield any results using it.  This is due to the
structure of the data inside documents and the defined filters.
A company, where a person works, is not declared using one triple,
instead, a blank node is used as an intermediate entity.
That is, we use two triples as in
\verb|(_:person voc:workAt _:blankWork)| and
\verb|(_:blankWork voc:hasOrganisation _:company)|.
This makes the company inaccessible from the subweb of the
data publisher, the reason for this is the filter used by
the subweb specification published at the person's data set.
The filter only allows triples having the person in the
subject position, this will only include the triple
\verb|(_:person voc:workAt _:blankWork)|
which doesn't refer to the actual company.
As for \swslTwo, on average eleven results are obtained.
This change is due to the inclusion of triples whose property
is either \verb|voc:hasOrganisation| or  \verb|foaf:name| from
the subweb of person.
What this shows is that some care is required when using
excessive filtering.
As for \cMatch, the engine did not yield any results since no
triple pattern links for other persons is mentioned in the query.
% The issue here is with the subweb specification,
% it needs to be updated with respect to the person's friend's document.
% That is, this is a normal behavior of \ssl based on the subweb specification.

In the third query (see Table \ref{table:experiments-res3}), \cAll, \cMatch, and \ssl generated the same results.
Nevertheless, \cMatch has to do more link traversal to achieve the querying process,
it has also fetched more triples for this.
This is mainly due to the fact that \cMatch does the traversal
by using information from the query,
the more triples patterns the query has,
the more links the engine will have to traverse.

In the last query, \cMatch produces on average 18 results that
seem to be missing from \ssl.  This is not true, since all the extra
tuples retrieved by \cMatch are duplicates.  Thus, both \cMatch
and \ssl obtained the same \emph{set} of results, but \cMatch did it in a
more efficient manner.  The reason for this behavior of \ssl is that defined
subweb specifications for persons include a lot of irrelevant data to the
query, which are pruned earlier with \cMatch.
% In the last query, \cMatch produces 31 are not obtained by \ssl, meaning that in our virtual linked web, also this traversal strategy can discover data that is not part of the subweb of interest.
% we have more results with \cMatch than \ssl.
% This means that we have some results that are not part of
% the ``gold standard'' results.
% \bart{NUMBERS????}

% Looking at the query (\cref{eval:q4}), for each person in the queried graph, we have as many bindings as the number of comment he likes. But with \ssl, we have no way to get to other persons' likes, this is because a like is specified using an intermediate blank node and the subweb specification of persons includes only triples with the target agent in the subject position. On the other hand, we have more chances to get person using \cMatch, this is because of the last triple pattern (\textit{?comm} \url{voc:hasCreator} \textit{?creator}), \cMatch will follow the \IRI of every comment creator (variable \textit{?creator}), which will generate more results since (almost) each person stores his likes in his document.
% \heba{This paragraph was old.}

\begin{table}
\centering
\begin{tabular}{lrrrrr} %USE NO | --> UGLY IN TPLP STYLE
	 \toprule
	Approach & \# traversed links & traversal time & \# triples & query evaluation time & \# results \\
	 \midrule
	 \ssl & 8187 & 88 s & 5124 & 1 s & 0\\
	 \midrule
 	  \swslOne & 8187 & 98 s & 5550 & 1 s & 16 \\
	 \midrule
	 \cMatch & 4 & 0 s & 703 & 1 s & 1\\
	 \midrule
	 \cAll & 29969 & 474 s & 698126 & 11 s & 1921\\
	 \midrule
	 \cNone & 0 & 0 s & 343 & 0 s & 0\\
	 \bottomrule
\end{tabular}
\caption{Performance results for (\cref{eval:q1}).  In evaluating \cAll for
this query, the experiment did not yield any results for one of the twelve
tested seed sources.  The reason was that the number of triples collected
from traversing the links phase was so huge, which made the \sparql query
engine crash.  Hence, the number of triples, the query evaluation time,
and the number of results for \cAll are the averages of the other eleven runs.}
\label{table:experiments-res1}
\end{table}

\begin{table}
\centering
\begin{tabular}{lrrrrr} %USE NO | --> UGLY IN TPLP STYLE
	\toprule
	Approach & \# traversed links & traversal time & \# triples & query evaluation time & \# results \\
	\midrule
	\ssl & 5294 & 53 s & 3195 & 1 s & 0\\
	\midrule
	\swslTwo & 5294 & 55 s & 4703 & 1 s & 11\\
	\midrule
	\cMatch & 5 & 0 s & 556 & 1 s & 0\\
	\midrule
	\cAll & 27251 & 423 s & 597198 & 362 s & 16372\\
	\midrule
	\cNone & 0 & 0 s & 279 & 0 s & 0\\
	\bottomrule
\end{tabular}
\caption{Performance results for (\cref{eval:q2}).  In evaluating \cAll
for this query, the experiment did not yield any results for three of the
twelve tested seed sources.  The reason was the same reason as the one
mentioned in \cref{table:experiments-res1}.  Hence, the number of triples, the query
evaluation time, and the number of results for \cAll are the averages of
the other nine runs.}
\label{table:experiments-res2}
\end{table}

\begin{table}
\centering
\begin{tabular}{lrrrrr} %USE NO | --> UGLY IN TPLP STYLE
	\toprule
	Approach & \# traversed links & traversal time & \# triples & query evaluation time & \# results \\
	\midrule
	\ssl & 14 & 1 s & 4567 & 1 s & 134\\
	\midrule
	\cMatch & 641 & 50 s & 221487 & 29 s & 134\\
	\midrule
	\cAll & 32701 & 512 s & 767735 & 135 s & 134\\
	\midrule
	\cNone & 0 & 0 s & 36 & 0 s & 0\\
	\bottomrule
\end{tabular}
\caption{Performance results for (\cref{eval:q3}).}
\label{table:experiments-res3}
\end{table}

\begin{table}
\centering
\begin{tabular}{lrrrrr} %USE NO | --> UGLY IN TPLP STYLE
	\toprule
	Approach & \# traversed links & traversal time & \# triples & query evaluation time & \# results \\
	\midrule
	\ssl & 8287 & 87 s & 3881 & 1 s & 35\\
	\midrule
	\cMatch & 95 & 2 s & 1732 & 1 s & 53\\
	\midrule
	\cAll & 29968 & 454 s & 703911 & 2.4 h & 56902\\
	\midrule
	\cNone & 0 & 0 s & 286 & 0 s & 0\\
	\bottomrule
\end{tabular}
\caption{Performance results for (\cref{eval:q4}).}
\label{table:experiments-res4}
\end{table}

% \younes{proposed title:Opportunities for result quality and performance improvements}
\section{Discussion}
\label{sec:QueryProcessing}
% \bart{The discussion needs to be rewritten taking the experiments into account} LOOKS FINE
% \todo{also this section needs a rewrite}
So far, we have studied~\ltqp from the perspective of data quality;
namely, we allow \queryingagents and/or \datapublishers
to capture a~subweb of data
that satisfies certain quality properties for them.
In real-world applications,
such quality properties could for example indicate
different notions of trust,
or something use-case-specific such as data~sensitivity levels.
While our formal framework only associates a single subweb specification to each
agent, it is not hard to extend it to associate multiple subweb constructions with each agent
and allow the \queryingagent  to pick a suitable one.
%
% Their exact purpose would be described in the context
% as a~policy associated with a~subweb construction.
% %

% The same mechanism can actually be used
% to describe other constraints or properties of the data
% in~order to perform \emph{guided}~\ltqp~\cite{verborgh_amw_2020},
% in~which the query engine limits the number of documents it has to fetch
% based on contextual information.

The same mechanism can be used to improve \emph{efficiency} in two ways:
the data publishers can opt to \emph{not} include certain documents in their subweb, and
for the ones included, they can use a \emph{filter} which indicates which data
will be used from said document.
% since we know that the filters we introduced will be applied to the entire subweb
% of certain agents, we can \emph{match} the filter and the query to ensure that we only follow
% links that have at least a chance of contribution to the query result, and as such \emph{prune}
% useless links, and limit the number of documents we have to fetch.
%
%
% the
% filters we proposed
% These subweb specifications can not just be used to improve data quality, but also to
% reduce the
%
%
% Since network costs contribute negatively to \ltqp~performance,
% eliminating \http~requests via contextual information
% directly leads to lower query times.

Most prominently,
every publisher of Linked~Data typically has their own way
of organizing data across documents,
and they could capture this structure in their subweb of interest.
For~example,
in contrast to Bob (\cref{lst:Bob}),
Ann stores her profile information in multiple documents
(\cref{lst:Ann,lst:AnnDetails}).
If she were to declare this as a subweb specfication, she can use
filters to indicate which data can be found in which documents.
A~query processor can then exploit this information
to only follow links to relevant documents  (documents of Ann's subweb for which the filter \emph{could}
keep triples that contribute to the query result).
For example,
Uma's \queryingagent can use
Ann's subweb construction of \cref{lst:FollowFriendsAnnScl}
to prune the set of links to follow,
and as such perform a~guided navigation
while maintaining completeness guarantees.
Without even inspecting \url{https://photos.ex/ann/},
it~knows Ann (and thus Uma) does not trust triples in this document
for data about~her,
so fetching~it will not change the final query result.
Whereas \ltqp under \cAll~semantics
would require at~least 7~\http~requests,
the filters allow us to derive
which 4~requests are needed
to return all 3~trusted results of the specification-annotated~query.
Analogous performance gains were observed in work
on provenance-enabled queries~\cite{ProvenanceQueries}.
In contrast,
traditional~\ltqp cannot make any~assumptions
of what to encounter behind a~link.
The work
on describing document structures using shapes~\cite{spec:shapetrees}
can be~leveraged~here.
% However,
% query efficiency can be~improved even without such structural information.
% If a~query engine can prove that
% certain triples in a~certain document would not be part of the query's subweb,
% or that this document itself is not part of~that,
% then it does not need to fetch this~document.
%

As such, filters in subweb specifications serve two purposes:
they define \emph{semantics} by selecting only part of a~data source,
and give query processors \emph{guidance}
for saving bandwidth and thus processing~time.
% \rt{Everything after this feels a bit misplaced. I suggest moving it to right before "Analogous performance gains were observed".}

% \input{reachability-semantics}
\section{Conclusion}
\label{sec:Conclusion}
% \todo{make conclusion consistent with paper again}
\ltqp
is generally not considered suitable for real-world applications
because of its performance and data~quality implications.
However,
if the current decentralization trend continues,
we need to prepare for a~future with multi-source query processing,
since some data \emph{cannot} be~centralized
for legal or other reasons.
% \linebreak

Federated querying over expressive interfaces such as~\sparql endpoints
only addresses part of the problem:
empirical evidence suggests that, counterintuitively,
less expressive interfaces can  lead
to faster processing times for several queries~\cite{verborgh_jws_2016},
while being less expensive to host.
A~document-based interface
is about the simplest interface imaginable,
and is thereby partly responsible for the Web's scalability.
Hence the need to~investigate
how far we can push~\ltqp
for internal and external integration of private and public~data.

Our formalization for specification-annotated queries
creates the theoretical foundations
for a~next generation of traversal-based
(and perhaps \emph{hybrid}) query processing,
in which data quality can be controlled tightly,
and network requests can be reduced significantly.
% \linebreak
Moreover,
the efforts to realize these necessary improvements
are distributed across the~network,
because every \datapublisher can describe their own subwebs.
Importantly, the availability of such descriptions
is also driven by other needs.
For instance,
initiatives such as~Solid~\cite{verborgh_timbl_chapter_2020}
store people's personal data as Linked~Data,
requiring every personal data~space to describe
their document organization
such that applications can read and write data at the correct locations~\cite{spec:shapetrees}.
% These descriptions can serve as~policies for~\ltqp.

% {
% \parfillskip 0pt
This article opens multiple avenues for future work.
A~crucial direction is the algorithmic handling of the theoretical framework, and its software implementation,
for which we have ongoing~work in the Comunica query engine~\cite{taelman_iswc_2018}; an important open question here is how the expressed filters can be exploited for query optimization.
Also on the implementation level,
the creation and management of subweb specifications should be facilitated.
This is because we don't expect users to manually create these subweb specifications,
but instead are to be created through a user-friendly user interface or automatic learning-based approaches,
similar to how today's discovery mechanisms~\cite{spec:typeindex,spec:shapetrees} in Solid are managed. 
% A~more elaborate relationship to existing languages and semantics
% will allow further theoretical insights and query optimizations.
% \rt{Does this relationship refer to LDQL? If so, this sentence can be removed now?}
% Notably,
% a~subset of contextualized \sparql queries should be translatable---%
% at runtime, when the concrete context is available---%
% to existing traversal-based languages~\cite{ldql,NautiLOD}.
Empirical evaluations will shed~light
on cases where subweb annotated \WOLDs and queries result in a~viable~strategy.

\subsection*{Acknowledgements}
We thank the reviewers for their thorough review and comments
on the earlier version of this paper.

% \subsubsection*{Acknowledgements}
% This research received funding from the Flemish Government under the ``Onderzoeksprogramma Artifici\"ele Intelligentie (AI) Vlaanderen'' programme.
% Ruben Taelman and Ruben Verborgh are postdoctoral fellows of the Research Foundation -- Flanders (FWO) (1274521N). Heba Aamer is supported by the Special Research Fund (BOF) (BOF19OWB16).

% % \clearpage Do not clear
\bibliographystyle{tlplike}
\bibliography{references}

% \newpage
% \todo{This should be page at most 17}
%
\appendix
% \section{Proofs}
% \input{proofs}

\section{Virtual linked web annotation}
\label{sec:subwebs-annot}
In this section, we present the subweb specifications
attached to the virtual linked web.
Table \ref{table:exp-subwebs} contains different subweb specifications
used in the settings of the experiment, specifically for \ssl.  As for the
specifications used in \swslOne and \swslTwo, they are simple modifications
and they will be mentioned afterward.
The symbol \verb|<>| is used to refer to the agent publishing the subweb.
Person-type documents are the ones that have the most information.
A person includes information about his city, friends, university, and company.
The $\bnfts{WITH SUBWEBS}$ directive is used by Persons' subweb specifications
for friends, cities, and workplaces,
this has the effect of also including the subweb specifications of these agents
when evaluating the persons' subwebs of interest.
The $\bnfts{INCLUDE}$ directive is used to allow information
only about the agents to be included
(triples having the agent \IRI in the subject position).
A forum includes (non-recursively) the content of its
messages (posts and comments), members, and moderators.

\begin{table}
\centering
\begin{tabular}{l|l|l}
\toprule
Document (\# of Instances) &
Subweb specification &
Information retrieved \\
\midrule

\multirow{19}{*}{Person (1528)} &
% \multirow{6}{*}{1528} &
\begin{lstlisting}[language=SPARQL]
FOLLOW ?city WITH SUBWEBS {
 <> voc:isLocatedIn ?city.
} INCLUDE { ?city ?p ?o. }
\end{lstlisting} &
City of the Person $^{*}$\\
\cmidrule{2-3}

&
\begin{lstlisting}[language=SPARQL]
FOLLOW ?person WITH SUBWEBS {
 <> voc:knows ?e.
 ?e voc:hasPerson ?person.
} INCLUDE { ?person ?p ?o. }
\end{lstlisting} &
Friends of the Person $^{*}$ \\
\cmidrule{2-3}

&
\begin{lstlisting}[language=SPARQL]
FOLLOW ?univ WITH SUBWEBS {
 <> voc:studyAt ?u.
 ?u voc:hasOrganisation ?univ.
} INCLUDE { ?univ ?p ?o. }
\end{lstlisting} &
University of the Person \\
\cmidrule{2-3}

&
\begin{lstlisting}[language=SPARQL]
FOLLOW ?org WITH SUBWEBS {
 <> voc:workAt ?w.
 ?w voc:hasOrganisation ?org.
} INCLUDE { ?org ?p ?o. }
\end{lstlisting} &
Company where the Person works\\
\cmidrule{2-3}

&
\begin{lstlisting}[language=SPARQL]
FOLLOW ?comment {
 <> voc:likes ?thing.
 ?thing voc:hasComment ?comment. }
\end{lstlisting} &
Comments of the Person \\
\cmidrule{2-3}

&
\begin{lstlisting}[language=SPARQL]
FOLLOW ?post {
 <> voc:likes ?thing.
 ?thing voc:hasPost ?post. }
\end{lstlisting} &
Posts of the Person \\
\midrule

\multirow{6}{*}{Forum (13750)} &
% \multirow{3}{*}{13750} &
\begin{lstlisting}[language=SPARQL]
FOLLOW ?message {
 <> voc:containerOf ?message. }
\end{lstlisting} &
Posts and Comments of the Forum \\
\cmidrule{2-3}

&
\begin{lstlisting}[language=SPARQL]
FOLLOW ?member {
 <> voc:hasMember ?m.
 ?m voc:hasPerson ?member. }
\end{lstlisting} &
Members of the Forum \\
\cmidrule{2-3}

&
\begin{lstlisting}[language=SPARQL]
FOLLOW ?mod {
 <> voc:hasModerator ?mod. }
\end{lstlisting} &
Moderator of the Forum \\
\midrule

\multirow{3}{*}{Comment (151043)} &
% \multirow{2}{*}{151043} &
\begin{lstlisting}[language=SPARQL]
FOLLOW ?post {
 <> voc:replyOf ?post. }
\end{lstlisting} &
Post (or Comment) of the Comment \\
\cmidrule{2-3}

&
\begin{lstlisting}[language=SPARQL]
FOLLOW ?creator {
 <> voc:hasCreator ?creator. }
\end{lstlisting} &
Creator of the Comment \\
\midrule

\multirow{1}{*}{Post (135701)} &
% \multirow{1}{*}{135701} &
\begin{lstlisting}[language=SPARQL]
FOLLOW ?creator {
 <> voc:hasCreator ?creator. }
\end{lstlisting} &
Creator of the Post \\
\midrule

\multirow{1}{*}{City (1341)} &
% \multirow{1}{*}{1341} &
\begin{lstlisting}[language=SPARQL]
FOLLOW ?place RECURSE 1 {
 <> voc:isPartOf ?place. }
\end{lstlisting} &
Location hierarchy of the City \\
\midrule

\multirow{1}{*}{Country (111)} &
% \multirow{1}{*}{111} &
\begin{lstlisting}[language=SPARQL]
FOLLOW ?continent {
 <> voc:isPartOf ?continent. }
\end{lstlisting} &
Continent of the Country\\
\bottomrule

\end{tabular}
\caption{The subweb specifications used in the experiment. The specifications at $^{*}$ are modified in \swslOne and \swslTwo.}
\label{table:exp-subwebs}
\end{table}

In \swslOne, we use different specifications for the
city and friends of person as follows:
\begin{itemize}
\item for the city of the person, the subweb specification is
\begin{lstlisting}[language=SPARQL]
FOLLOW ?city WITH SUBWEBS { <> voc:isLocatedIn ?city.}
INCLUDE { ?s ?p ?o. } WHERE { FILTER ( ?s=?city || ?p=voc:isPartOf ). }
\end{lstlisting}
\item for the friends of the person, the subweb specification is
\begin{lstlisting}[language=SPARQL]
FOLLOW ?person WITH SUBWEBS { <> voc:knows ?e. ?e voc:hasPerson ?person. }
INCLUDE { ?s ?p ?o. } WHERE { FILTER ( ?s=?person || ?p=voc:isPartOf ). }
\end{lstlisting}
\end{itemize}
In the schema of the virtual linked web (Figure \ref{fig:exp-schema}),
the predicate \verb|voc:isPartOf| is used to indicate the information
``a city \verb|voc:isPartOf| a country'' and ``a country  \verb|voc:isPartOf| a continent'',
this allows a city document to refer \emph{recursively}
to the second information using the $\bnfts{RECURSE}$ directive,
which allows the inclusion of triples of the form
\verb|(?city voc:isPartOf ?country)| and
\verb|(?country voc:isPartOf ?continent)| in the filtered
subweb of \verb|?city|, and hence, these triples are
going to be included in the filtered subwebs of persons.

As for the \swslTwo, we change only the specification
for the friends of person to be as follows:
\begin{lstlisting}[language=SPARQL]
FOLLOW ?person WITH SUBWEBS { <> voc:knows ?e. ?e voc:hasPerson ?person. }
INCLUDE { ?s ?p ?o. } WHERE { FILTER ( ?s=?person || ?p=voc:hasOrganisation || ?p=foaf:name ). }
\end{lstlisting}
This will allow for knowing the (names of) universities and companies of
persons, overcoming the use of an intermediate literal.

\end{document}